\theoremstyle{plain}
\newtheorem{thm}{Theorem}
\newtheorem{lem}[thm]{Lemma}
\newtheorem{rem}[thm]{Remark}
\theoremstyle{definition}
\newtheorem{definition}{Definition}
\theoremstyle{remark}
\newcommand{\DU}[2]{_{#1}\Delta_{#2}}
\newcommand{\DF}[2]{_{#1}D_{#2}}
\newcommand{\F}{\mathbb F}
\newcommand{\GF}[1]{\F_{#1}}
\newcommand{\GFb}[1]{\GF{2^{#1}}}
\newcommand{\GFp}[1]{\GF{p^{#1}}}
\def\gf(#1){\GF{#1}}
\newcommand{\Tr}{{\rm Tr}}
\newcommand{\Trn}{{\rm Tr}_n}
\begin{document}

\title{\bf Investigations on $c$-(almost) perfect nonlinear functions}
\author{Sihem Mesnager\thanks{S. Mesnager is with the Department of Mathematics, University of Paris VIII, 93526 Saint-Denis,  with University
Sorbonne Paris Cit\'e, LAGA, UMR 7539, CNRS,
93430 Villetaneuse,  and
also with the T\'el\'ecom Paris, 91120 Palaiseau, France. E-mail: smesnager@univ-paris8.fr}, \protect
\and Constanza Riera\thanks{C. Riera is with Department of Computer Science, Electrical Engineering and Mathematical Sciences,  Western Norway University of Applied Sciences,  5020 Bergen, Norway. 
E-mail: csr@hvl.no}, \\
\and Pantelimon~St\u anic\u a\thanks{P. St\u anic\u a is with Applied Mathematics Department, Naval Postgraduate School, Monterey 93943, USA.
E-mail: pstanica@nps.edu} \thanks{Corresponding author},
\and Haode Yan\thanks{H. Yan is with the School of Mathematics, Southwest Jiaotong University, Chengdu, 610031, China. E-mail: hdyan@swjtu.edu.cn},
\and Zhengchun Zhou\thanks{Z. Zhou is with the School of Mathematics, Southwest Jiaotong University, Chengdu, 610031, China.
E-mail: zzc@home.swjtu.edu.cn}}

\maketitle

\begin{abstract}
In a prior paper~\cite{EFRST20}, two of us,
along with P. Ellingsen, P. Felke, and A. Tkachenko,
defined a new (output) multiplicative differential and the corresponding $c$-differential uniformity, which has the potential of extending differential cryptanalysis. Here, we continue the work by looking at some APN functions through the mentioned concept and showing that their $c$-differential uniformity increases significantly in some cases.
\end{abstract}

{\bf Keywords:}
Boolean function,
$p$-ary function,
$c$-differential,
Walsh transform,
differential uniformity,
perfect and almost perfect $c$-nonlinearity.
\newline

{\bf MSC 2000}: 06E30, 11T06, 94A60, 94D10.

\section{Introduction}
Differential cryptanalysis (\cite{BS,BS93}) is one of the most
fundamental cryptanalytic approaches targeting symmetric-key
primitives. This cryptanalysis has attracted a lot of attention
since it was proposed, and it is the first statistical attack for breaking iterated block ciphers~\cite{BS}. The security of cryptographic
functions regarding differential attacks has been widely studied in the
last 30 years. This security is quantified by the so-called
\emph{differential uniformity} of the substitution box (S-box) used in
the cipher~\cite{NK}. A very nice survey on the differential
uniformity of vectorial Boolean functions can be found in the chapter
of Carlet~\cite{CH1}, and the recent book \cite{CarletBook}. Also, an interesting article on this topic is~\cite{Car18}.

Power permutations form a class of suitable candidates since they usually have lower implementation costs in hardware. Their resistance to the standard differential attack attracted attention and was investigated. Besides, several highly nonlinear vectorial functions with low differential uniformity have been derived in the literature. In Table \ref{table-0}, we summarize the known differential spectrum of power functions over $\mathbb{F}_{p^n}$ ($p$ odd). For $p=2$, we refer to \cite[Chapter 11]{CarletBook}.

\begin{table}[H]
\label{table-0}
\footnotesize
\caption{ Known differential spectrum of power functions over $\mathbb{F}_{p^n}$ ($p$ odd).}
\centering
\begin{tabular}{|c||c|c|c|}
\hline
Function & conditions & Differential uniform & Refs. \\
[0.5ex]
\hline
$x^{3^n-3}$ & $p=3$, $n>2$ & 5& \cite{XZLH20}\\
\hline
$x^{\frac{p^k+1}{2}}$ & $p$ odd & $\frac{p^{ \mbox{gcd}(n,k)}+1}{2}$ $\mbox { or }$ $p^{ \mbox {gcd}(n,k)+1}$  &\cite{CHNC13}\\
\hline
$x^{\frac{p^n+1}{p^m+1}+\frac{p^n-1}{2}}$ & $ n \mbox { odd}$, $p\equiv 3 \pmod 4$, $m\mid n$ & $\frac{p^m+1}{2} $ &\cite{CHNC13}\\
\hline
$x^{p^{2k}-p^k+1}$ & $p, \frac{n}e \mbox { are odd, gcd} (n,k)=e$ & $p^e+1$ & \cite{Lei2019,Yan_etal19}\\
\hline
\end{tabular}
\end{table}

In \cite{BCJW02}, a new type of differential was introduced that
might be interesting from a practical perspective for ciphers that use
modular multiplication as a primitive operation. This new
differential allowed a modification of the differential cryptanalysis and was used
to cryptanalyze some existing ciphers. The authors argue that one should
look at other types of differential for Boolean vectorial function
$F$, that is, $(F(x+a),cF(x))$ and not only $(F(x+a),F(x))$, as usual. We now describe the differential introduced in~\cite{EFRST20}.

\begin{definition}
  Let  $\GFp n$ denote the field with $p^n$ elements, where $p$ is a prime number and $n$ is a positive  integer. For a function $F:\GFp n\to \GFp n$, and $c\in\GFp n$, we define the (multiplicative) $c$-derivative  of $F$ with respect to $a\in\GFp n$ as
  \begin{equation*}
    \DF ca F(x) = F(x+a) - cF(x),
  \end{equation*}
  for every $x\in\GFp n$. For $b\in\GFp n$, we define
  $\DU cF(a,b)=\#\{ x\in\GFp n,\,\DF ca F(x)=b \}$ and  call
  $\DU cF=\max\{ \DU cF(a,b)\,:\,a,b\in\GFp n,\mbox{ and $a\not=0$ if
    $c=1$}\}$, the $c$-differential uniformity of $F$ (we say that $F$ is $(c,\,\DU cF)$-uniform).
\end{definition}

If the $c$-differential uniformity of $F$ equals $1$, then $F$ is
called a perfect $c$-nonlinear (P$c$N) function. P$c$N functions over
odd characteristic finite fields are also called $c$-planar
functions. If the $c$-differential uniformity of $F$ is $2$, then $F$
is called an almost perfect $c$-nonlinear (AP$c$N) function. It is
easy to see that, for $c=1$, the $c$-differential
uniformity becomes the usual differential uniformity, and P$c$N and
AP$c$N functions become perfect nonlinear (PN) functions, and almost
perfect nonlinear (APN) functions, respectively, play an important
role in both theory and applications. For even characteristics, APN
functions have the lowest differential uniformity. This is, however, not the case for the $c$-differential uniformity, as there exist PcN functions also in even characteristic.

Recently, in an independent work, Bartoli and Timpanella~\cite{BT}   gave a generalization of planar functions, and we recall that below.
\begin{definition}\label{defBT}
Let $\beta \in \mathbb{F}_{p^n} \backslash \{0,1\}$. A function $F : \mathbb{F}_{p^n} \rightarrow \mathbb{F}_{p^n}$ is a $\beta$-planar  function in $\mathbb{F}_{p^n}$ if $\forall~ \gamma \in \mathbb{F}_{p^n},~~ F(x+\gamma) - \beta F(x)$ is a permutation of $\mathbb{F}_{p^n}.$
\end{definition}
In the particular case, when $\beta =-1$, the $\beta$-planar functions are called quasi-planar. In view of the definitions of ~\cite{EFRST20}, the $\beta$-planar functions are simply  PcN functions and quasi-planar functions are PcN functions with $c=-1$.

In \cite{EFRST20}, the authors studied this new notion of differential uniformity for Gold functions and some APN power mappings presented in \cite{HRS99,HS97}. In this paper, we continue the investigations initiated in \cite{EFRST20} and continued in~\cite{BC20,SPRS20,SG20} resolving notably some computational observations presented in the first paper. For the sake of clarity, we display in Table~\ref{table-1} the results presented in this paper, along with some prior ones.

\begin{table}[H]
\label{table-1}
\footnotesize
\caption{Power functions $F(x)=x^d$ over $\gf(p^n)$ with  low $c$-differential uniformity}
\centering
\begin{tabular}{|c||c|c|c|c|}
\hline
$p$&$d$ & condition & $_c\Delta_F$ & Refs. \\
[0.5ex]
\hline
any& $2$& $c\neq1$ & 2 &\cite{EFRST20}\\
\hline
any &$p^n-2$ &$c=0$ & $1$ &\cite{EFRST20}\\
\hline
2 &$2^n-2$ &$c\neq0$, $\mathrm{Tr_n}(c)=\mathrm{Tr_n}(c^{-1})=1$ & $2$ &\cite{EFRST20}\\
\hline
2 &$2^n-2$ &$c\neq0$, $\mathrm{Tr_n}(c)=0$ or  & $3$ &\cite{EFRST20}\\
& & $\mathrm{Tr_n}(c^{-1})=0$ &&\\
\hline
odd &$p^n-2$ &$c=4,4^{-1}$, or $\eta(c^2-4c)=-1$  & $2$ &\cite{EFRST20}\\
& & and $\eta(1-4c)=-1$ & &\\
\hline
odd &$p^n-2$ &$c\neq0,4,4^{-1}$, $\eta(c^2-4c)$=1   & $3$ &\cite{EFRST20}\\
 &  & or $\eta(1-4c)=1$ &  & \\
\hline
3& $({3^k+1})/{2}$& $c=-1$, $n/\gcd(k,n)=1$ & 1 &\cite{EFRST20}\\
\hline
odd& $({p^2+1})/{2}$& $c=-1$, $n$ odd & 1 &\cite{BT}\\
\hline
odd& $p^2-p+1$& $c=-1$, $n=3$ & 1 &\cite{BT}\\
  \hline
odd & $p^4+(p-2)p^2+$  & $c=-1$, $n=5$ & 1 &\cite{SPRS20}\\
  &   $(p-1)p+1$ &   &  &\\
\hline
odd & $(p^5+1)/(p+1)$ & $c=-1$, $n=5$ &  &\cite{SPRS20}\\
  \hline
  odd & $(p-1)p^6 +p^5+$ & $c=-1$, $n=7$ & 1 &\cite{SPRS20}\\
  &  $(p-2)p^3 +(p-1)p^2 +p$ &  & &\\
    \hline
  odd & $(p-2)p^6 +(p-2)p^5+ $ & $c=-1$, $n=7$ & 1 &\cite{SPRS20}\\
  & $(p-1)p^4 +p^3 +p^2 +p$&  & &\\
  \hline
odd & $(p^7+1)/(p+1)$ & $c=-1$, $n=7$ & 1 &\cite{SPRS20}\\
  \hline
  any & $p^k+1$& $1\neq c\in\gf(p^{\gcd(n,k)})$  & $p^{\gcd(n,k)}+1$ &Thm \ref{pk+1}\\
  \hline
  2& $2^k+1$& $c\neq1$, $\frac{n}{\gcd(n,k)}\geq 3$ ($n$ odd)& $2^{\gcd(n,k)}+1$ &Thm \ref{thm:Gold}\\
  & &  $\frac{n}{\gcd(n,k)}\geq 4$ ($n$ even) & &\\
\hline
odd& $(p^k+1)/2$& $c=1$ & $\leq4$ &  \cite{HS97}\\
\hline
odd& $(p^k+1)/2$& $c=-1$ & $p^{\gcd(n,k)}+1$ &Thm \ref{plusone}\\
\hline
odd& $(p^n+1)/2$& $c\neq\pm1$ & $\leq 4$ &Thm \ref{pn+1over2}\\
  \hline
odd& $(p^n+1)/2$& $c\neq\pm1$, $\eta(\frac{1-c}{1+c})=1$,  & $\leq 2$ &Thm \ref{pn+1over2}\\
& & $p^n\equiv 1 \pmod 4$ & &\\
\hline
  3  & $\frac{3^n+3}2$ & $c=-1$, $n$ even & 2 & Thm  \ref{3^3+3over2}\\
  \hline
 $>3$ & $(p^n+3)/2$&  $p^n\equiv 3 \pmod 4$, $c=-1$ & $\leq 3$ &Thm \ref{pn+3over2}\\
\hline
$>3$& $(p^n+3)/2$&   $p^n\equiv 1 \pmod 4$, $c=-1$ & $\leq 4$ &Thm \ref{pn+3over2}\\
  \hline
   3  & $3^n-3$ &  $c=1, n>1$ odd & $2$ & \cite{HRS99} \\
  \hline
  3  & $3^n-3$ &  $c=1,n>2,n=2\pmod 4$ & $4$ & \cite{XZLH20} \\
  \hline
   3  & $3^n-3$ &  $c=1,n>2,n=0\pmod 4$ & $5$ & \cite{XZLH20} \\
  \hline
   3  & $3^n-3$ &   $c=-1$, $n=0\pmod 4$ & $6$ & Thm \ref{thm:pn-3} \\\hline
  3  & $3^n-3$ &  $c=-1$, $n\not=0\pmod 4$ & $4$ & Thm \ref{thm:pn-3} \\\hline
  3  & $3^n-3$ &   $c=0$ & $2$ & Thm \ref{thm:pn-3} \\
  \hline
  3  & $3^n-3$ &  $c\not=0,-1$ & $\leq 5$ & Thm \ref{thm:pn-3} \\
  \hline
odd& $(p^n-3)/2$& $c=-1$ & $\leq 4$ &Thm \ref{pn-3over2}\\
  \hline
  any & $(2p^n-1)/3$& $p^n\equiv 2 \pmod 3$, $c\neq1$ & $\leq 3$ &Thm \ref{over3}\\
  \hline
\end{tabular}
\end{table}

\section{Notation and Preliminaries}
\label{sec:preliminaries}

Let $n$ be a positive integer and $\F_{p^n}$ denote the  finite field with $p^n$ elements, and $\F_{p^n}^*=\F_{p^n}\setminus\{0\}$ its multiplicative group (for $a\neq 0$, we often write $\frac{1}{a}$ to mean the inverse of $a$ in the multiplicative group). We let $\F_p^n$ be the $n$-dimensional vector space over $\F_p$. We will denote by $\eta(\alpha)$ the quadratic character of $\alpha$ (that is, $\eta(\alpha)=0$ if $\alpha=0$, $\eta(\alpha)=1$ if $0\neq \alpha$ is a square, $\eta(\alpha)=-1$ if $\alpha$ is not a square). $|A|$ will denote the cardinality of a set $A$.
We call a function from $\F_{p^n}$ to $\F_p$  a {\em $p$-ary  function} on $n$ variables.
$\Trn:\F_{p^n}\to \F_p$ is the absolute trace function, given by $\Trn(x)=\sum_{i=0}^{n-1} x^{p^i}$.

Given a $p$-ary  function $f$, the derivative of $f$ with respect to~$a \in \F_{p^n}$ is the $p$-ary  function
$
 D_{a}f(x) =  f(x + a)- f(x), \mbox{ for  all }  x \in \F_{p^n}.
$

For positive integers $n$ and $m$, any map $F:\F_{p^n}\rightarrow\F_{p^m}$ (or, alternatively, $F:\F_p^n\to\F_p^m$, though we will in this paper deal with the former form of the function) is called a {\em vectorial $p$-ary  function}, or {\em $(n,m)$-function}. When $m=n$, $F$ can be uniquely represented as a univariate polynomial over $\F_{p^n}$ (using some identification, via a basis, of the finite field with the vector space) of the form
$
F(x)=\sum_{i=0}^{p^n-1} a_i x^i,\ a_i\in\F_{p^n},
$
whose {\em algebraic degree}   is then the largest Hamming weight of the $p$-ary expansion of the exponents $i$ with $a_i\neq 0$.

\begin{lem}
\label{power}
Let $F(x)=x^d$ be a power function over $\gf(p^n)$. Then
$$
_c\Delta_F=\mathrm{max}\big\{~ \{{_c\Delta_F}(1,b): b\in\gf(p^n) \} \cup \{\gcd(d,p^n-1)\}~\big\}.
$$
\end{lem}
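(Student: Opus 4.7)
The plan is to reduce the computation of $\DU cF$ from all pairs $(a,b)$ to two simpler sources: the pairs $(1,b)$, which govern the $a\neq 0$ contribution via the homogeneity of power functions, and the pair $(0,\cdot)$, which contributes $\gcd(d,p^n-1)$ when $c\neq 1$.

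First, I would handle the case $a\in\gf(p^n)^*$. Because $F(x)=x^d$ is homogeneous of degree $d$, the substitution $x=ay$ transforms the defining equation $(x+a)^d-cx^d=b$ into
$$a^d\bigl[(y+1)^d-cy^d\bigr]=b,\quad\text{i.e.,}\quad (y+1)^d-cy^d = ba^{-d}.$$
Since $y\mapsto ay$ is a bijection of $\gf(p^n)$, this yields $\DU cF(a,b)=\DU cF(1,ba^{-d})$ for every $a\neq 0$. As $(a,b)$ ranges over $\gf(p^n)^*\times\gf(p^n)$, the element $ba^{-d}$ ranges over all of $\gf(p^n)$ (taking $a=1$ already hits every $b$), and therefore
$$\max_{a\neq 0,\,b}\DU cF(a,b)\;=\;\max_{b\in\gf(p^n)}\DU cF(1,b).$$

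Second, I would treat $a=0$. By the definition of $\DU cF$, this case only enters the maximum when $c\neq 1$. The equation then reduces to $(1-c)x^d=b$: for $b=0$ the only solution is $x=0$, while for $b\neq 0$ the number of solutions equals the number of $d$-th roots of $b/(1-c)$ in $\gf(p^n)^*$, which is either $0$ (when $b/(1-c)$ is not a $d$-th power) or exactly $\gcd(d,p^n-1)$. Choosing $b=(1-c)y^d$ for any $y\in\gf(p^n)^*$ realises the latter, so $\max_b\DU cF(0,b)=\gcd(d,p^n-1)$ in this case.

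Taking the overall maximum of the contributions from both cases gives the claimed formula. The argument is essentially a bookkeeping one, and the only (minor) obstacle is verifying that $\{ba^{-d}:a\in\gf(p^n)^*,\;b\in\gf(p^n)\}=\gf(p^n)$ and keeping straight the $a=0$ / $c=1$ exclusion in the definition. No deeper machinery is needed beyond the homogeneity of $x^d$ and the elementary fact that a nonzero $d$-th power in $\gf(p^n)^*$ has exactly $\gcd(d,p^n-1)$ preimages under the map $x\mapsto x^d$.
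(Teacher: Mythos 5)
Your proposal is correct and follows essentially the same route as the paper's proof: the substitution $x=ay$ gives $\DU cF(a,b)=\DU cF\bigl(1,\tfrac{b}{a^d}\bigr)$ for $a\neq 0$, and the $a=0$, $c\neq 1$ case reduces to counting $d$-th roots of $\tfrac{b}{1-c}$, yielding the $\gcd(d,p^n-1)$ term. The only difference is cosmetic bookkeeping; both arguments (like the paper's) leave the $c=1$ case, where $a=0$ is excluded by definition, implicitly aside.
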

\begin{proof}
For $a=0$ and $c\neq 1$, $_c\Delta_F(0,b)$ is equal to the number of $x\in\gf(p^n)$ such that $x^d=\frac{b}{1-c}$. More precisely,
\begin{eqnarray*}
_c\Delta_F(0,b)=
\begin{cases}
1, &\text{if}~b=0, \\
\gcd(d,p^n-1), &\text{if}~\frac{b}{1-c}\in \gf(p^n)^*~\mathrm{is}~\mathrm{a}~d\mathrm{th~power}, \\
0, &  \text{otherwise}.
\end{cases}
\end{eqnarray*}
Then $\mathrm{max}\{_c\Delta_F(0,b):b\in\gf(p^n)\}=\gcd(d,p^n-1)$.

For $a\neq0$, the two equations $(x+a)^d-cx^d=b$ and $(\frac{x}{a}+1)^d-c(\frac{x}{a})^d=\frac{b}{a^d}$ are equivalent to each other, hence $_c\Delta_F(a,b)={_c}\Delta_F\left(1,\frac{b}{a^d}\right)$. The conclusion then follows from the definition of $_c\Delta_F$.
\end{proof}

We will be using later the next lemma, which is found in~\cite[Lemma 9]{EFRST20}.
\begin{lem}
\label{lem:gcd}
Let $p,k,n$ be integers greater than or equal to $1$ (we take $k\leq n$, though the result can be shown in general). Then
\begin{align*}
&  \gcd(2^{k}+1,2^n-1)=\frac{2^{\gcd(2k,n)}-1}{2^{\gcd(k,n)}-1},  \text{ and if  $p>2$, then}, \\
& \gcd(p^{k}+1,p^n-1)=2,   \text{ if $\frac{n}{\gcd(n,k)}$  is odd},\\
& \gcd(p^{k}+1,p^n-1)=p^{\gcd(k,n)}+1,\text{ if $\frac{n}{\gcd(n,k)}$ is even}.\end{align*}
Consequently, if either $n$ is odd, or $n\equiv 2\pmod 4$ and $k$ is even,   then $\gcd(2^k+1,2^n-1)=1$ and $\gcd(p^k+1,p^n-1)=2$, if $p>2$.
\end{lem}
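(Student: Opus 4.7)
The plan is to prove each of the three formulas in turn, building on the classical identity $\gcd(p^a-1,p^b-1)=p^{\gcd(a,b)}-1$ and the observation that $p^k+1$ always divides $p^{2k}-1$, so that
\[
\gcd(p^k+1,p^n-1)\ \text{divides}\ \gcd(p^{2k}-1,p^n-1)=p^{\gcd(2k,n)}-1.
\]

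For characteristic $2$, I would start from the coprimality $\gcd(2^k+1,2^k-1)=1$ (both are odd and differ by $2$) to write
\[
\gcd(2^k+1,2^n-1)\cdot\gcd(2^k-1,2^n-1)=\gcd(2^{2k}-1,2^n-1),
\]
and then substitute the values $\gcd(2^j-1,2^n-1)=2^{\gcd(j,n)}-1$ on the right and in the second factor on the left to read off the first formula directly.

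For odd $p$, set $d=\gcd(k,n)$, $k=dk'$, $n=dn'$ with $\gcd(k',n')=1$, and put $g=\gcd(p^k+1,p^n-1)$; I would split according to the parity of $n'$. If $n'$ is odd, then $\gcd(2k',n')=1$, so $\gcd(2k,n)=d$; thus $g\mid p^d-1$, which in turn divides $p^k-1$, so $g$ divides $\gcd(p^k+1,p^k-1)=2$. Since $p$ is odd, both $p^k+1$ and $p^n-1$ are even, giving $g=2$. If instead $n'$ is even, then $k'$ is odd, and I would invoke the standard divisibilities ``$x+1\mid x^m+1$ when $m$ is odd'' and ``$x+1\mid x^m-1$ when $m$ is even'' with $x=p^d$ to conclude that $p^d+1$ divides both $p^k+1$ and $p^n-1$, hence $p^d+1\mid g$. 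Since $\gcd(2k,n)=2d$ in this subcase, we also have $g\mid(p^d-1)(p^d+1)$, so writing $g=(p^d+1)h$ yields $h\mid p^d-1\mid p^k-1$ and therefore $h\mid\gcd(p^k+1,p^k-1)=2$.

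The main obstacle is ruling out the spurious possibility $h=2$ in the last case, i.e.\ showing $g=p^d+1$ rather than $2(p^d+1)$. I would do this by means of the explicit identity
\[
\frac{p^k+1}{p^d+1}=\sum_{i=0}^{k'-1}(-1)^i(p^d)^{k'-1-i},
\]
an alternating sum of $k'$ odd integers; since $k'$ is odd, this sum is odd, so $\frac{p^k+1}{p^d+1}$ is not divisible by $2$, forcing $h=1$. The final ``consequently'' statement then drops out: when $n$ is odd, or when $n\equiv 2\pmod 4$ with $k$ even, a quick $2$-adic computation shows $n/\gcd(n,k)$ is odd, which reduces to the ``$n'$ odd'' branch for odd $p$ and makes $\gcd(2k,n)=\gcd(k,n)$ for $p=2$, producing the claimed values $2$ and $1$, respectively.
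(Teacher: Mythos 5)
Your proposal is correct, and it is worth noting that the paper itself gives no proof of this lemma at all: it simply imports the statement as Lemma~9 of the cited reference \cite{EFRST20}, so yours is a genuine self-contained derivation rather than a variant of an argument in the text. The three pillars of your proof all check out: the multiplicativity $\gcd(ab,m)=\gcd(a,m)\gcd(b,m)$ for coprime $a,b$, applied with $a=2^k+1$, $b=2^k-1$ (coprime since both are odd and differ by $2$), cleanly yields the characteristic-$2$ formula from $\gcd(2^j-1,2^n-1)=2^{\gcd(j,n)}-1$; the case split on the parity of $n'=n/\gcd(n,k)$ correctly computes $\gcd(2k,n)$ as $d$ or $2d$ and pins $g$ between $p^d+1$ and $(p^d-1)(p^d+1)$; and the $2$-adic step ruling out $h=2$ is sound, since $(p^k+1)/(p^d+1)$ is an alternating sum of an odd number $k'$ of odd terms, hence odd, so $v_2(p^k+1)=v_2(p^d+1)$ and $2(p^d+1)$ cannot divide $p^k+1$. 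The concluding reduction of the ``consequently'' clause to the $n'$-odd branch is also correct. The one stylistic remark is that your appeal to $g\mid\gcd(p^{2k}-1,p^n-1)$ in the $n'$-odd case and the explicit divisibilities $x+1\mid x^m\pm1$ in the $n'$-even case could be unified (both cases amount to locating $g$ inside $p^{\gcd(2k,n)}-1$ and then extracting the even part), but as written every step is justified and nothing is missing.
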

For more notions and properties of Boolean functions and related notions, the reader can consult~\cite{Bud14,CH1,CarletBook,CS17,MesnagerBook,Tok15}.

\section{Power functions}
\label{sec:power-functions}
In this section, 
$\gf(p^n)^\sharp=\gf(p^n)\setminus\{0,-1\}$; recall that $\eta$ denotes the quadratic character on $\gf(p^n)$.
We fix also some notation and list some facts which will be used in this section :
\begin{itemize}
\item $S_{i,j}:=\{x\in \gf(p^n)^\sharp: ~\eta(x+1)=i, \eta(x)=j\}$, where $i,j \in \{\pm 1\}$.
\item $S_{1,1}\sqcup S_{-1,-1}\sqcup S_{1,-1}\sqcup S_{-1,1}=\gf(p^n)^\sharp$ (here `$\sqcup$' denotes the disjoint union).
\item $\Delta_d(x):=(x+1)^d-cx^d$. Trivially, $\Delta_d(0)=1$ and $\Delta_d(-1)=(-1)^{d+1}c$. Where $d$ is clear, we will denote this simply by $\Delta(x)$.
\item $\delta_d(b):=\#\{x\in\gf(p^n):~\Delta_d(x)=b\}$. Where $d$ is clear, we will denote this simply by $\delta(b)$.
\end{itemize}

\subsection{Gold functions}
\label{sec:gold-functions}
In \cite{EFRST20}, the $c$-differential uniformity of the Gold
function $x\in\GFp n, x\mapsto x^{p^k+1}$ has been studied
in odd characteristic and in even characteristic for $x\in\GFp n, x\mapsto x^3$.
In this section, we push further the study initiated in that paper and
begin with proving that AP$c$N functions can be obtained in any odd characteristic for all $1\neq c\in\gf(p)$.

\begin{thm}
\label{pk+1}
Let $F(x)=x^d$ be a power function over $\gf(p^n)$, where $d=p^k+1$. For $1\neq c\in\gf(p^{\gcd(k,n)})$, the $c$-differential uniformity of $F$ is $_c\Delta_F=\gcd(d,p^n-1)$. Particularly, for $p>2$, if $\frac{n}{\gcd(n,k)}$ is odd, $F(x)$ is AP$c$N, while, if $\frac{n}{\gcd(n,k)}$ is even, $F(x)$ is  $(c,p^{\gcd(k,n)}+1)$-uniform. For $p=2$, $F(x)$ is  $\left(c,\frac{2^{\gcd(2k,n)}-1}{2^{\gcd(k,n)}-1}\right)$-uniform.
\end{thm}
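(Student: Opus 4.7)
The plan is to invoke Lemma~\ref{power} to reduce to the $a=1$ case, then make a linear change of variables that turns the defining equation into a pure $d$-th power equation, from which the bound $\gcd(d,p^n-1)$ drops out immediately.

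First I would apply Lemma~\ref{power}, which says $_c\Delta_F = \max\{\max_b {_c\Delta_F}(1,b),\, \gcd(d,p^n-1)\}$. So it suffices to show $_c\Delta_F(1,b)\leq \gcd(p^k+1,p^n-1)$ for every $b$, since the $a=0$ contribution already supplies $\gcd(d,p^n-1)$ and this value will be attained.

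For the $a=1$ case, I would expand
\[
(x+1)^{p^k+1} - c\,x^{p^k+1} = (1-c)x^{p^k+1} + x^{p^k} + x + 1 = b.
\]
The crucial observation is that $c\in\gf(p^{\gcd(k,n)})$ forces $c^{p^k}=c$, hence $(1-c)^{p^k}=1-c$. Substituting $y=(1-c)x$ (legitimate because $c\neq 1$) gives $y^{p^k}=(1-c)x^{p^k}$ and $y^{p^k+1}=(1-c)^2 x^{p^k+1}$, so multiplying the equation through by $(1-c)$ and collecting yields
\[
y^{p^k+1} + y^{p^k} + y + 1 = 1 - (1-c)(1-b) = c + (1-c)b.
\]
The left side factors as $(y+1)^{p^k+1}$, so with $z=y+1$ we obtain the equivalent equation
\[
z^{p^k+1} = c + (1-c)b.
\]
The number of solutions $z\in\gf(p^n)$ is $1$ if the right-hand side is $0$, and otherwise is either $0$ or $\gcd(p^k+1, p^n-1)$ depending on whether $c+(1-c)b$ lies in the image of the $(p^k+1)$-th power map. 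In all cases the count is at most $\gcd(p^k+1,p^n-1)=\gcd(d,p^n-1)$, which combined with Lemma~\ref{power} gives $_c\Delta_F = \gcd(d,p^n-1)$.

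Finally, I would plug this into Lemma~\ref{lem:gcd} to read off the three stated specializations: for $p>2$ with $n/\gcd(n,k)$ odd the gcd is $2$ (AP$c$N); for $p>2$ with $n/\gcd(n,k)$ even it is $p^{\gcd(k,n)}+1$; and for $p=2$ it is $(2^{\gcd(2k,n)}-1)/(2^{\gcd(k,n)}-1)$. There is no real obstacle here—the only subtle point is the hypothesis $c\in\gf(p^{\gcd(k,n)})$, which is exactly what is needed to make the linearizing substitution $y=(1-c)x$ commute with the Frobenius $x\mapsto x^{p^k}$ and collapse the quadratic form into a perfect $(p^k+1)$-th power.
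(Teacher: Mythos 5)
Your proposal is correct and is essentially the paper's own argument: both proofs use the hypothesis $c\in\gf(p^{\gcd(k,n)})$ (so that $(1-c)^{p^k}=1-c$) to make an affine change of variable that collapses $(x+1)^d-cx^d$ into a pure $(p^k+1)$-th power plus a constant, then invoke Lemma~\ref{power} and Lemma~\ref{lem:gcd}. The paper writes $\Delta(x)=(1-c)\left(x+\frac{1}{1-c}\right)^{d}+\frac{c}{c-1}$ directly, which is the same substitution as your $z=(1-c)x+1$ up to a scalar.
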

\begin{proof}
Note that
\begin{align*}
\Delta(x)=&(x+1)^d-cx^d\\
=&(1-c)x^{p^k+1}+x^{p^k}+x+1\\
=&(1-c)\left(x^{p^k+1}+\frac{1}{1-c}x^{p^k}+\frac{1}{1-c}x\right)+1\\
=&(1-c)\left(x+\frac{1}{1-c}\right)^{d}+\frac{c}{c-1}.
\end{align*}
The last identity holds since
$\frac{1}{1-c}\in\gf(p^{\gcd(k,n)})$. Then $\Delta(x)$ is a shift of
$x^d$. For $b\in \gf(p^n)$, $\delta(b)\leq\gcd(d,p^n-1)=e$ and there
exists some $b$ such that the equality holds. By Lemma \ref{power},
$F$ is $(c,e)$-uniform. By Lemma \ref{lem:gcd}, $e=\frac{2^{\gcd(2k,n)}-1}{2^{\gcd(k,n)}-1}$ and the theorem is shown.
\end{proof}

Let us specialize our study to the even characteristic. In~\cite{EFRST20}, the odd characteristic case of the Gold function is treated; as to the even characteristic case,
the case $x\mapsto x^3$ is treated (note that there is a typo in~\cite{EFRST20}, the maximal $c$-differential uniformity is 2 for $n\leq 2$ and 3 for $n\geq3$), and there are only some computational results for $3\leq n\leq 8$ and $x\mapsto x^5$.
In this paper, we present a new result that is valid for an infinite family
of values of $n$. This result shows that, surprisingly, the $c$-differential
uniformity of Gold functions may increase significantly (we also recover the claim for $p=2$ of Theorem~\ref{pk+1}).

\begin{thm} 
 \label{thm:Gold}
  Let $2\leq k<n$, $n\geq 3$ and $G(x)=x^{2^k+1}$ be the Gold function
  on $\GFb n$. Assume that $n=md$, where
  $d=\gcd(n,k)$ and $m\geq 3$. If $1\neq c\in  \GFb d$, the $c$-differential uniformity  of $G$ is $\DU cG=\frac{2^{\gcd(2k,n)}-1}{2^{\gcd(k,n)}-1}$. If $c\in\GFb n\setminus \GFb d$, the $c$-differential uniformity  of $G$ is $\DU cG=2^d+1$. 
  \end{thm}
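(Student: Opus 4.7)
For the first part of the statement, when $1 \neq c \in \mathbb{F}_{2^d}$, there is nothing new to establish: since $\mathbb{F}_{2^d} = \mathbb{F}_{2^{\gcd(k,n)}}$, Theorem~\ref{pk+1} applies with exponent $2^k+1$ and immediately gives $_c\Delta_G = \gcd(2^k+1,2^n-1) = (2^{\gcd(2k,n)}-1)/(2^{\gcd(k,n)}-1)$. The real work is in the second part, $c \in \mathbb{F}_{2^n}\setminus\mathbb{F}_{2^d}$. My plan is to prove $_c\Delta_G \leq 2^d+1$ by reducing, through two substitutions, the relevant polynomial equation to a linearized polynomial whose kernel is bounded via Lemma~\ref{lem:gcd}, and then to exhibit an explicit input pair attaining equality.

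For the upper bound, Lemma~\ref{power} reduces to $a=1$, noting that the $a=0$ contribution is at most $\gcd(2^k+1,2^n-1)\leq 2^d+1$. I study $(1+c)x^{2^k+1}+x^{2^k}+x = b+1$. Given two distinct solutions $x_1,x_2$ with $u:=x_1+x_2\neq 0$, subtracting and using the characteristic-$2$ identity $x_1^{2^k+1}+x_2^{2^k+1} = u^{2^k+1}+x_1u^{2^k}+x_1^{2^k}u$ yields, with $x_1$ fixed,
\[
P(u) := (1+c)\,u^{2^k+1}+\bigl((1+c)x_1+1\bigr)u^{2^k}+\bigl((1+c)x_1^{2^k}+1\bigr)u=0,
\]
which factors as $P(u)=(1+c)\,u\cdot\tilde f(u)$ with
\[
\tilde f(u)=u^{2^k}+(x_1+\alpha)u^{2^k-1}+x_1^{2^k}+\alpha, \qquad \alpha:=\tfrac{1}{1+c}.
\]

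Disposing of the corner cases $x_1=\alpha$ and $x_1^{2^k}=\alpha$ directly (each gives at most two distinct $u$-roots of $P$), I treat the generic case by substituting $u=(x_1+\alpha)w$ to rewrite $\tilde f(u)=0$ as $(x_1+\alpha)^{2^k}w^{2^k-1}(w+1)=x_1^{2^k}+\alpha$, and then $w=1/s$ to arrive at the affine linearized equation $\gamma s^{2^k}+s=1$, where $\gamma:=(x_1^{2^k}+\alpha)/(x_1+\alpha)^{2^k}\neq 0$. The kernel of $s\mapsto \gamma s^{2^k}+s$ consists of $0$ together with the solutions of $s^{2^k-1}=1/\gamma$, which by Lemma~\ref{lem:gcd} number at most $2^d-1$; hence the affine equation has at most $2^d$ solutions. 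Counting the extra root $u=0$ of $P$, I obtain $_c\Delta_G(1,b)\leq 1+2^d$ and therefore $_c\Delta_G\leq 2^d+1$.

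For the lower bound, I reverse-engineer the construction. As $x_1$ varies over $\mathbb{F}_{2^n}\setminus\{\alpha\}$, a direct computation (using $\alpha^{2^k}\neq \alpha$, which holds since $c\notin\mathbb{F}_{2^d}$) shows $\gamma$ sweeps out $\mathbb{F}_{2^n}\setminus\{1\}$; so I pick $s_0\in\mathbb{F}_{2^n}\setminus\{0,1\}$, set $\gamma:=(1+s_0)/s_0^{2^k}$ (which makes $s_0$ automatically a solution of $\gamma s^{2^k}+s=1$), and back-solve for a matching $x_1$. To realize $|E_{x_1}|=2^d+1$ I need both (i) $\gamma\neq 1$ (so that some $x_1$ realizes this $\gamma$), and (ii) the kernel of $s\mapsto\gamma s^{2^k}+s$ has full size $2^d$, equivalently $s_0/(1+s_0)\in H$, where $H$ denotes the subgroup of $(2^k-1)$-st powers in $\mathbb{F}_{2^n}^*$. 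I expect this existence check to be the main technical obstacle. The set of $s_0$ with $\gamma=1$ has size at most $2^d$ (solutions of the linearized equation $s^{2^k}+s=1$), while the involution $s\mapsto s/(1+s)$ shows that $\{s_0:\,s_0/(1+s_0)\in H\}$ has size $|H|-1=(2^n-1)/(2^d-1)-1$; the hypothesis $m\geq 3$ then gives $|H|-1\geq 2^d+2^{2d}>2^d$, so a valid $s_0$ satisfying both (i) and (ii) exists. The corresponding $x_1$ and $b:=G(x_1)+1$ yield $|S_b|=2^d+1$, completing the argument.
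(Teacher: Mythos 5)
Your proof is correct, but it takes a genuinely different route from the paper's in the main case $c\in\F_{2^n}\setminus\F_{2^d}$. The paper normalizes the differential equation to the canonical trinomial $z^{2^k+1}+z+\beta=0$ and then invokes the Bluher/Helleseth--Kholosha classification of its root counts (only $0$, $1$ or $2^d+1$ roots can occur when $\gcd(n,k)=d$, together with the exact number $M_{2^d+1}$ of $\beta$'s attaining the maximum, positive for $m\geq 3$ odd and $m\geq 4$ even); both bounds are read off from that external result. You instead argue from scratch: the difference $u$ of two solutions satisfies a polynomial that factors through the affine linearized equation $\gamma s^{2^k}+s=1$, whose solution set has size at most $2^d$ because the associated kernel is $\{0\}\cup\{s:\, s^{2^k-1}=1/\gamma\}$ with $\gcd(2^k-1,2^n-1)=2^d-1$; the lower bound is then obtained by reverse-engineering $\gamma$ from a prescribed solution $s_0$ and counting admissible $s_0$ via the involution $s\mapsto s/(1+s)$, which works uniformly for all $m\geq 3$. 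What your approach buys is self-containedness (no appeal to the Helleseth--Kholosha machinery) and a single existence argument covering both parities of $m$; what it gives up is the finer distributional information (the exact number of $b$ realizing each multiplicity) that the paper's citation delivers for free. Two small repairs: the identity $\gcd(2^k-1,2^n-1)=2^{\gcd(k,n)}-1$ is standard but is not what Lemma~\ref{lem:gcd} states (that lemma concerns $\gcd(p^k+1,p^n-1)$), and the final choice of $b$ should be $b=G(x_1+1)+cG(x_1)$, the value of the $c$-derivative at $x_1$, rather than $G(x_1)+1$, which does not make $x_1$ a solution of the differential equation.
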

  \begin{proof}
  We consider the differential equation at $a$, say $G(x+a)-c\,G(x)=b$, for some $b\in\F_{2^n}$, which is equivalent to
  \[
  (1-c)\, x^{2^k+1}+ x^{2^k} a+x\, a^{2^k} +a^{2^k+1}-b=0.
  \]
  Dividing by $(1-c)$ and taking $x=y-\frac{a}{1-c}$ this last equation reads
  \begin{equation}
  \label{eq:1}
  y^{2^k+1}+ \frac{a^{2^k}}{1-c} \left(1+\frac{1}{(1-c)^{2^k-1}} \right)y+\frac{c a^{2^k+1}+b (1-c)}{(1-c)^2}=0.
  \end{equation}
  If $1+\frac{1}{(1-c)^{2^k-1}}=0$, which is equivalent to $(1-c)^{2^k-1}=1$, that is, $c\in\F_{p^d}$, Equation~\eqref{eq:1} transforms into 
  \[
   y^{2^k+1}=\frac{c a^{2^k+1}+b (1-c)}{(1-c)^2},
  \]
  which has $ \gcd(2^{k}+1,2^n-1)=\frac{2^{\gcd(2k,n)}-1}{2^{\gcd(k,n)}-1}$ solutions if and only if the right hand side is a $\left(\frac{2^{\gcd(2k,n)}-1}{2^{\gcd(k,n)}-1}\right)$-power (via Lemma~\ref{lem:gcd}). Surely, there are values of $a,b\in\F_{2^n}, b\neq 0$ satisfying that condition; for example, let $a=0$, $b=(1-c)t^{\frac{2^{\gcd(2k,n)}-1}{2^{\gcd(k,n)}-1}}$, for some element $t\in\F_{2^n}^*$. The first claim is shown.
  
 Next, we take $c\in\GFb n\setminus \GFb d$.  Replace $y=\alpha z$ in Equation~\eqref{eq:1}, where $\displaystyle \alpha=\left( \frac{a^{2^k}}{1-c}\left(1+\frac{1}{(1-c)^{2^k-1}} \right)\right)^{2^{-k}}$ (note that the $2^k$-root exists since, for any $p$, $\gcd(p^k,p^n-1)=1$ and $x^{p^k}$ is therefore a permutation on $\F_{p^n}$). The previous equation becomes
  \begin{equation}
  \label{eq:cGold}
  z^{2^k+1}+z+\beta=0,
  \end{equation}
  where $\displaystyle  \beta=\frac{c a^{2^k+1}+b (1-c)}{\alpha^{2^k+1}(1-c)^2}$.

  We will be using some results of~\cite{HK08} (see also~\cite{Bluher04} and \cite{Dob06}).
We first assume that $\gcd(n,k)=1$. By~\cite[Theorem 1]{HK08}, we know that Equation~\eqref{eq:cGold} has either none, one or three solutions in $\F_{2^n}$. In fact, the distribution of these cases for $n$ odd (respectively, $n$ even) is  (denoting by $M_m$ the amount of equations of type ~\eqref{eq:cGold} with $m$ solutions):

  \begin{align*}
  M_0&=\frac{2^n+1}{3}\ \left(\text{respectively}, \frac{2^n-1}{3}\right)\\
  M_1&=2^{n-1}-1\ \left(\text{respectively}, 2^{n-1}\right)\\
  M_3&=\frac{2^{n-1}-1}{3}\ \left(\text{respectively}, \frac{2^{n-1}-2}{3}\right).
  \end{align*}
 Then, for $n\geq3$, $c\neq1$, and $\gcd(n,k)=1$, and since $\beta$ is linear on $b$, this implies that, for any $\beta$ and any $a,c$, we can find $b$ such that $\displaystyle  \beta=\frac{c a^{2^k+1}+b (1-c) }{\alpha^{2^k+1}(1-c)^2}$, so the $c$-differential uniformity of the Gold function is $3$.

  We now assume that $\gcd(n,k)=d>1$. As in~\cite{HK08}, for $v\in\F_{2^n}\setminus \F_{2^d}$, we denote $v_i:=v^{2^{ik}}$, $i\geq 0$, and we define  (for $n=md$) a sequence of polynomials (an approach started by Dobbertin~\cite{Dob04})
  \[
  \begin{array}{l}
  C_1(x)=1;  C_2(x)=1\\
  C_{i+2}(x)=C_{i+1}(x)+x_iC_{i}(x),\ \mbox{ for } 1\leq i\leq n-1.
  \end{array}
  \]
 Let $V=\frac{v_0^{2^{2k}+1}}{(v_0+v_1)^{2^k+1}}$. Then, by ~\cite[Lemma 1]{HK08}
  \[
   C_m(V)=\frac{\Tr^n_d(v_0)}{v_1+v_2}\prod_{j=2}^{m-1} \left( \frac{v_0}{v_0+v_1}\right)^{2^{jk}}.
  \]
   We know by~\cite[Lemma 1]{HK08} that if $n$ is odd (respectively, even) there are $\displaystyle \frac{2^{(m-1)d}-1}{2^{2d}-1}$ (respectively,  $\displaystyle \frac{2^{(m-1)d}-2^d}{2^{2d}-1}$) distinct zeros of $C_m(x)$ in $\F_{2^n}$, that are defined by $V$ above with $\Tr_d^n(v_0)=0$ (also, every zero of $C_m(x)$ occurs with multiplicity $2^k$~\cite{HK08}). Further, by ~\cite[Proposition 4]{HK08}, Equation~\eqref{eq:cGold} has $2^d+1$ zeros in $\F_{2^n}$ for as many as $M_{2^d+1}=\frac{2^{(m-1)d}-1}{2^{2d}-1}$, for $m$ odd, respectively, $M_{2^d+1}=\frac{2^{(m-1)d}-2^d}{2^{2d}-1}$, for $m$ even, values of $\beta$. To be more precise, those $\beta$ achieving this bound must satisfy $C_m(\beta)=0$ (if $d=k$, this is a complete description).  For $m$ odd,  $M_{2^d+1}\geq 1$ is achieved when $m\geq 3$. For $m$ even,  $M_{2^d+1}\geq 1$ is achieved when $m\geq4$. 
   
  The only thing to argue now is whether for a fixed $c\neq 1$, and given $\beta\in\F_{2^n}$, there exist $a,b\in\F_{2^n}$ such that $\displaystyle  \beta=\frac{c a^{2^k+1}+b(1-c)}{\alpha^{2^k+1}(1-c)^2}$, where $\displaystyle \alpha^{2^{k}}=  \frac{a^{2^k}}{1-c}\left(1+\frac{1}{(1-c)^{2^k-1}} \right)$. However, that is easy to see since the obtained equation is linear in $b$.
Therefore, the $c$-differential uniformity of $G$ is $2^d+1$.
\end{proof}

\begin{rem}
If we specialize the above theorem to APN Gold functions, that is when  $\gcd(k,n)=1$, we get
that $F(x)=x^{2^k+1}$ is differentially $(c,3)$-uniform when $1\neq c\in\gf(2^n)$, $n\geq 3$.
\end{rem}

\subsection{Power mappings with low differential $c$-uniformity}
\label{sec:power-mappings-with}
In this subsection, we will use Dickson polynomials of the first kind,
which are defined as
$\displaystyle D_d(x,a)= \sum_{i=0}^{\lfloor \frac{d}{2} \rfloor}
\frac{d}{d-i} \binom{d-i}{ i}(-a)^i x^{d-2i}$, and have the property
$D_m\left(u+\frac{a}{u},a\right)=u^m+\left(\frac{a}{u}\right)^m$ for
$u\in\F_{p^{2n}}$ \cite{LN96}; since, in this paper, the second
variable is always 1,  by abuse, we will
write $D_m(x)$). We will also use Theorem~9 of~\cite{CGM88}, which
states that, for $p$ odd, and supposing $2^r||(p^{2n}-1)$ (where
$2^r||t$ means that $2^r|t$ but $2^{r+1}\nmid t$), then, for
$x_0\in\F_{p^n}$,
$$
|D_d^{-1}(D_d(x_0))|=
\begin{cases}
m, &\mbox{ if }\eta(x_0^2-4)=1,\,D_d(x_0)\neq\pm2\\
\bar{\ell}, &\mbox{ if }\eta(x_0^2-4)=-1,\,D_d(x_0)\neq\pm2\\
\frac{m}{2}, &\mbox{ if }\eta(x_0^2-4)=1,\,2^t||d,\,1\leq t\leq r-2,D_d(x_0)=-2\\
\frac{\bar{\ell}}{2}, &\mbox{ if }\eta(x_0^2-4)=-1,\,2^t||d,\,1\leq t\leq r-2,D_d(x_0)=-2\\
\frac{m+\bar{\ell}}{2}, &\mbox{ otherwise,}
\end{cases}
$$
where $m=\gcd(d,p^n-1),\,\bar{\ell}=\gcd(d,p^n+1)$, and $\eta(\alpha)$ is the quadratic character of $\alpha$.

In \cite{HRS99,HS97},
there are several APN power mappings. Among them, it is shown that
$x^{\frac{5^k+1}2}$ is an APN power function of $\gf(5^n)$ when
$\gcd(2n,k)=1$. In this paper, we study more generally the
$c$-differential uniformity of the power mapping $x^{\frac{p^k+1}{2}}$
in odd characteristic (note that the classical uniformity for this power mapping was studied by Helleseth et al. \cite{HS97} and later by Chou, Hong el al. \cite{CHNC13}) and show:

 \begin{thm}
  \label{plusone}
  Let $p$ be an odd prime, and let $F(x)=x^{\frac{p^k+1}{2}}$ on $\F_{p^n}$, $1\leq k<n$, $n\geq 3$. If $c=-1$, then $F$ is PcN if and only if   $\frac{2n}{\gcd(2n,k)}$ is odd.   Otherwise, $F$ will have the $(-1)$-differential uniformity,  $_{-1}\Delta_F=\frac{p^{\gcd(k,n)}+1}{2}$.
  \end{thm}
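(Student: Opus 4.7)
The plan is to apply Lemma~\ref{power} and analyze separately the cases $a=0$ and $a=1$; any nonzero $a$ reduces to $a=1$ by rescaling $x\mapsto x/a$.

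For $a=0$, the equation is $2x^d=b$, with at most $\gcd(d,p^n-1)=\gcd((p^k+1)/2,p^n-1)$ solutions. A short 2-adic computation using Lemma~\ref{lem:gcd} shows this gcd equals $1$ exactly when $\frac{2n}{\gcd(2n,k)}$ is odd (the claimed PcN condition), and otherwise is bounded by $(p^{\gcd(k,n)}+1)/2$, with equality when $n/\gcd(n,k)$ is even.

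For $a=1$, I study $(x+1)^d+x^d=b$. Squaring the relation $F(x+1)=b-F(x)$, using $F(x)^2=G(x)=x^{p^k+1}$ (the Gold function) and $G(x+1)-G(x)=x^{p^k}+x+1$, yields the polynomial equation
\[
\Pi_b(x):=x^{p^k}+2bx^d+x+1-b^2=0,
\]
equivalently $(x^d+bx)^2=(b^2-1)\,x(x+1)$ for $x\neq 0$. Conversely, factoring $(x+1)^{p^k+1}=(b-x^d)^2$ as $(x+1)^d=\pm(b-x^d)$ identifies $\Pi_b^{-1}(0)$ as the disjoint union, apart from a single possible overlap at $x=-1$ when $b=(-1)^d$, of the target set $N_b=\{x:F(x+1)+F(x)=b\}$ and the standard-differential set $M_{-b}=\{x:F(x+1)-F(x)=-b\}$.

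To bound $|\Pi_b^{-1}(0)|$ I substitute $t=2x+1$, converting the equation into $(t+1)^d+(t-1)^d=2^d b$, i.e.\ $D_d(2t,\,t^2-1)=2^d b$ via the Dickson polynomial identity. Applying the homogeneity $D_d(\lambda X,\lambda^2 A)=\lambda^d D_d(X,A)$ with $\beta^2=t^2-1$ (working in $\mathbb{F}_{p^{2n}}$ when $t^2-1$ is a non-square in $\mathbb{F}_{p^n}$) and setting $T=2t/\beta$, squaring yields
\[
D_d(T,1)^2=b^2(T^2-4)^d.
\]
The Dickson identity $D_d(T,1)=u^d+u^{-d}$ for $T=u+u^{-1}$, together with $T^2-4=(u-u^{-1})^2$, then factors the relation (after multiplying by $u^{2d}$) as $u^{p^k+1}+1=\pm b\,(u^2-1)^{(p^k+1)/2}$. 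Counting $u$-solutions of each sign branch in $\mathbb{F}_{p^{2n}}$ using Theorem~9 of~\cite{CGM88} with $m=\gcd(d,p^n-1)$ and $\bar\ell=\gcd(d,p^n+1)$, split on $\eta(T^2-4)$, bounds the number of admissible $T$; the correspondence $T\leftrightarrow\{x,-1-x\}$ then transfers to a bound on $|\Pi_b^{-1}(0)|$ and hence on $|N_b|$.

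The main obstacle is separating the two sign branches $D_d(T,1)=\pm b\beta^d$, which distinguish $N_b$ from the standard-differential contributions $M_{-b}$ (and from $N_{-b}$), and tracking which $T\in\mathbb{F}_{p^{2n}}$ lift to $x\in\mathbb{F}_{p^n}$ rather than to $\mathbb{F}_{p^{2n}}\setminus\mathbb{F}_{p^n}$. The dichotomy in the conclusion emerges from the case split $v_2(k)>v_2(n)$ versus $v_2(k)\leq v_2(n)$ (equivalently, $\frac{2n}{\gcd(2n,k)}$ odd versus even), which controls $m,\bar\ell$ and the parity of $d$ in the CGM88 formula. An explicit choice of $b$ --- for instance $b=0$, reducing to counting $y\in\mu_{\gcd(p^k+1,p^n-1)}$ with $y^d=-1$ via $y=(x+1)/x$ --- exhibits the bound $(p^{\gcd(k,n)}+1)/2$ as attained in the non-PcN case.
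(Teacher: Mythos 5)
Your outline is built around the same two ingredients as the paper's proof (Dickson polynomials and Theorem~9 of \cite{CGM88}), but it has a genuine, and in fact self-acknowledged, gap in the upper bound. By squaring $F(x+1)=b-F(x)$ you pass to $\Pi_b(x)=0$, whose solution set is (up to one point) $N_b\sqcup M_{-b}$, where $M_{-b}$ is a fiber of the \emph{ordinary} differential of $F$. Everything downstream of that squaring therefore bounds $|N_b|+|M_{-b}|$, not $|N_b|$, and you never supply the mechanism for discarding the $M_{-b}$ branch; you name this as ``the main obstacle'' and leave it open. The same issue recurs after the Dickson substitution, where squaring again produces $D_d(T,1)^2=b^2(T^2-4)^d$ and two sign branches you cannot separate. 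The paper sidesteps this entirely: after normalizing to $h_{-1}(z)=(z-2)^{d}+(z+2)^{d}$ and writing $z=y+y^{-1}$, one has $z\pm2=(y\pm1)^2/y$, so the half-integer exponent applied to a perfect square becomes the integer exponent $p^k+1$ with \emph{no} sign ambiguity, and the whole expression collapses identically to $2\,D_{\frac{p^k+1}{2}}(z)$. The count is then exactly a fiber size of a single Dickson polynomial, to which N\"obauer's criterion (for the PcN direction) and \cite{CGM88} (for the exact uniformity) apply cleanly. If you want to salvage your route, you must either justify removing $M_{-b}$ (e.g.\ by exact counts for the $c=1$ differential from \cite{HS97}, which still only gives an upper bound of the form $|N_b|\le|\Pi_b^{-1}(0)|$) or abandon the squaring in favor of the identity above.

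There is a second, smaller gap in the attainment argument. Your witness $b=0$ counts solutions of $y^{d}=-1$ with $y=(x+1)/x$, which yields at most $\gcd(d,p^n-1)$ solutions. In the non-PcN sub-case where $v_2(k)=v_2(n)$, one has $\gcd(p^k+1,p^n-1)=2$ by Lemma~\ref{lem:gcd}, so this witness produces at most $2$ solutions, while the claimed value $\frac{p^{\gcd(k,n)}+1}{2}$ arises there from $\bar\ell=\gcd(d,p^n+1)$, i.e.\ from fibers of $D_d$ over points $x_0$ with $\eta(x_0^2-4)=-1$ (the ``$y\in\F_{p^{2n}}\setminus\F_{p^n}$'' regime). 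So $b=0$ exhibits attainment only when $n/\gcd(n,k)$ is even, and you need a separate witness (or simply the exactness of the \cite{CGM88} value-set count, as the paper uses) for the other sub-case. Your $a=0$ analysis and the gcd computation identifying the PcN condition $\frac{2n}{\gcd(2n,k)}$ odd are consistent with the paper's.
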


  \begin{proof}
  We use the approach of~\cite{CS97,EFRST20}, where it was shown that $x^{\frac{3^k+1}{2}}$ is PcN on $\F_{3^n}$, for $c=\pm 1$. Similarly, our function is $PcN$ if and only if the $c$-derivative $(x+a)^{\frac{p^k+1}{2}}-c x^{\frac{p^k+1}{2}}$ is a permutation polynomial. With a change of variable $z=\frac{-a x}{4}$, we see that this is equivalent to  $\left(z-4\right)^{\frac{p^k+1}{2}}-c z^{\frac{p^k+1}{2}}$ being a permutation polynomial. Shifting by 2, we can see that this happens if and only if $h_c(z)=\left(z-2\right)^{\frac{p^k+1}{2}}-c \left(z+2\right)^{\frac{p^k+1}{2}}$ is a permutation polynomial.
  We can always write $z=y+y^{-1}$, for some $y\in\F_{p^{2n}}$. Our condition (for general $c\neq 1$) becomes
  \allowdisplaybreaks
 \begin{align*}
 h_c(z)&=\left(y+y^{-1}-2\right)^{\frac{p^k+1}{2}}-c\left(y+y^{-1}+2\right)^{\frac{p^k+1}{2}}\\
 &= \frac{\left(y^2-2y+1\right)^{\frac{p^k+1}{2}}-c\left(y^2+2y+1\right)^{\frac{p^k+1}{2}}}{y^{\frac{p^k+1}{2}}}\\
&= \frac{\left(y-1\right)^{p^k+1}-c\left(y+1\right)^{p^k+1}}{y^{\frac{p^k+1}{2}}}\\
 &= \frac{(1-c)y^{p^k+1}-(1+c)y^{p^k}-(1+c) y+(1-c)}{y^{\frac{p^k+1}{2}}}\\
 &=(1-c) y^{\frac{p^k+1}{2}}-(1+c) y^{\frac{p^k-1}{2}}-(1+c) y^{\frac{-p^k+1}{2}}+(1-c) y^{\frac{-p^k-1}{2}}\\
 &= (1-c) D_{\frac{p^k+1}{2}}\left(z\right)-(1+c) D_{\frac{p^k-1}{2}}\left(z\right)
 \end{align*}
 is a permutation polynomial, where $D_m(x)(=D_m(x,1))$ is the Dickson polynomial of the first kind, in our notation.

  If $c=-1$, we obtain that
 $D_{\frac{p^k+1}{2}}(x)$ must be a permutation polynomial, and this is equivalent (by~\cite{No68}, see also \cite{LN96}) to   $\displaystyle \gcd\left( \frac{p^k+1}{2}, p^{2n}-1 \right)=1$. This last identity can be further simplified to $\displaystyle \gcd\left( p^k+1 , p^{2n}-1 \right)=2$. By Lemma~\ref{lem:gcd} a necessary and sufficient condition for that to happen is for   $\displaystyle \frac{2n}{\gcd(2n,k)}$ to be odd (this holds if and only if $k$ is even and, if $n=2^ta,\,2\!\not| a$, with $t\geq0$, and $k=2^\ell b,\,2\!\not| b$, then $\ell\geq t+1$).

In general, if $\ell=\max\left\{\left|D^{-1}_{\frac{p^k+1}{2}}(b)\right|: b\in \F_{p^n}\right\}$, then $_{-1}\Delta_F=\ell$. Here we will use Theorem~9 of~\cite{CGM88} stated above. Here $d=\frac{p^k+1}{2}$, so $m=\gcd\left(\frac{p^k+1}{2},p^n-1\right)=\frac{1}{2}\gcd(p^k+1,p^n-1),\,\bar{\ell}=\gcd\left(\frac{p^k+1}{2},p^n+1\right)=\frac{1}{2}\gcd(p^k+1,p^n+1)$. Then, in general, we can obtain all cases,
and so
\allowdisplaybreaks
\begin{align*}
\ell& =\max\left\{\frac{1}{2}\gcd(p^k+1,p^n-1),\frac{1}{2}\gcd(p^k+1,p^n+1),\right.\\
&\qquad\qquad \left. \frac{1}{4}(\gcd(p^k+1,p^n-1)+\gcd(p^k+1,p^n+1))\right\}\\
&=\max\left\{\frac{1}{2}\gcd(p^k+1,p^n-1),\frac{1}{2}\gcd(p^k+1,p^n+1)\right\}.
\end{align*}
Note that, by Lemma \ref{lem:gcd}, $\gcd(p^k+1,p^n-1)=2$ if $\frac{n}{\gcd(n,k)}$  is odd and $\gcd(p^{k}+1,p^n-1)=p^{\gcd(k,n)}+1$ if $\frac{n}{\gcd(n,k)}$ is even, while
$$\begin{array}{ll}&\gcd(p^k+1,p^n+1)=\gcd(p^k+1,p^n+1-(p^k+1))\\
&=\gcd(p^k+1,p^{k}(p^{n-k}-1))=\gcd(p^k+1,p^{n-k}-1)\\
&=\left\{\begin{array}{l}
2 \mbox{ if }\frac{n-k}{\gcd(n-k,k)}=\frac{n-k}{\gcd(n,k)} \mbox{  is odd}\\
p^{\gcd(k,n)}+1\mbox{ if }\frac{n-k}{\gcd(n-k,k)}=\frac{n-k}{\gcd(n,k)} \mbox{ is even}
\end{array}\right.
 \end{array}$$ so, if $n=2^ta,\,2\!\not| a$, with $t\geq0$, and $k=2^\ell b,\,2\!\not| b$, then, if $\ell\geq t+1$, then $m=1=\bar{\ell}$, and so $\ell=1$ and the function is PcN, while, if $\ell=t$, then $m=2,\,\bar{\ell}=\frac{p^{\gcd(k,n)}+1}{2}=\ell$, while, if $\ell< t$, then $m=\frac{p^{\gcd(k,n)}+1}{2},\,\bar{\ell}=2$, and so $\ell=\frac{p^{\gcd(k,n)}+1}{2}$. Summarizing, if $\ell\leq t$, then  $\ell=\frac{p^{\gcd(k,n)}+1}{2}$.

The proof of the theorem is complete.
\end{proof}

\begin{rem}
  There is a typo in~\textup{\cite{EFRST20}}, where the authors proved that when $p=3$, $d=\frac{3^k+1}{2}$ and $c=-1$, $F(x)=x^d$ is P$c$N over $\gf(p^n)$ if and only if $\frac{2n}{\gcd(2n,k)}$ is odd (not $\frac{n}{\gcd(n,k)}$, as stated there).
\end{rem}

Surely, it is natural to wonder  whether the $c$-differential uniformity and
differential uniformity are close or not.  We answer  this question
for the power function $x^{\frac{p^n+1}{2}}$ over $\gf(p^n)$  below
and show that $c$-differential uniformity can be low for $c\neq 1$, although this
power mapping has a high differential uniformity. Note that the proof follows similar lines as those in \cite{HS97}, as do the proofs of Theorems \ref{pn+3over2} and \ref{pn-3over2}.

\begin{rem} 
\label{eta} 
Note that, while $\eta(0)=0$, we have that, for any $y\in\gf(p^n)^*$, $y^\frac{p^n-1}{2}=\eta(y)$, since, given $g$ a generator of $\gf(p^n)$, if $y=g^{2m}$, then $y^\frac{p^n-1}{2}=g^{m(p^n-1)}=1$, and, if $y=g^{2m+1}$, then $y^\frac{p^n-1}{2}=g^{m(p^n-1)}g^\frac{p^n-1}{2}=-1$.  This will be used in the proofs of Theorems~\textup{\ref{pn+1over2}}, \textup{\ref{pn+3over2}} and \textup{\ref{pn-3over2}}.
\end{rem}

\begin{thm}
\label{pn+1over2}
Let $F(x)=x^d$ be a power function over $\gf(p^n)$, where $d=\frac{p^n+1}{2}$ and $p$ is an odd prime. For $\pm1\neq c\in\gf(p^n)$, $_c\Delta_F \leq 4$. Moreover, if $p^n \equiv 1\pmod 4$ and $c$ satisfies $\eta(\frac{1-c}{1+c})=1$, then $_c\Delta_F \leq 2$.
\end{thm}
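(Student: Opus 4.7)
By Lemma \ref{power}, it suffices to bound $_c\Delta_F(0,b)$ and $_c\Delta_F(1,b)$. The $a=0$ count equals the number of $d$-th roots of $b/(1-c)$, which is at most $\gcd(d,p^n-1)$; since $2d=p^n+1$ and $\gcd(p^n+1,p^n-1)=2$, we get $\gcd(d,p^n-1)\le 2$, so this case never exceeds $2$. For $a=1$ I would use the identity $x^{(p^n+1)/2}=x\cdot\eta(x)$ for $x\ne0$ (Remark \ref{eta}) and partition $\gf(p^n)\setminus\{0,-1\}$ into the four sign classes $S_{i,j}$. On each class the defining equation $(x+1)^d-c x^d=b$ becomes \emph{linear}; e.g.\ on $S_{1,1}$ it is $1+(1-c)x=b$, giving the single candidate $x_1=(b-1)/(1-c)$, and analogously $x_2=(b+1)/(c-1)$, $x_3=(b-1)/(1+c)$, $x_4=-(b+1)/(1+c)$ on the other three classes.

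Thus there are at most four solutions outside $\{0,-1\}$. Adding the possibilities $x=0$ (occurring only when $b=1$) and $x=-1$ (only when $b=-c(-1)^d$) cannot push the total above four, because for these specific values of $b$ two of the candidates $x_i$ collapse onto $0$ or $-1$ and therefore fall outside the relevant $S_{i,j}$, freeing up the count. This establishes $_c\Delta_F\le 4$.

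For the refined bound $_c\Delta_F\le 2$, set $\alpha:=\eta(1-c)$ and $\beta:=\eta(1+c)$. The hypothesis $\eta((1-c)/(1+c))=1$ gives $\alpha=\beta$, while $p^n\equiv 1\pmod 4$ gives $\eta(-1)=1$ (and also makes $d$ odd, so $-c(-1)^d=c$). Using $x_1+1=(b-c)/(1-c)$, $x_2+1=(b+c)/(c-1)$, $x_3+1=(b+c)/(1+c)$, $x_4+1=(c-b)/(1+c)$, the condition that $x_k$ actually lies in its advertised class translates into the following character equalities (using $\eta(c-1)=\eta(-1)\alpha=\alpha$):
\[
\begin{array}{ll}
x_1\in S_{1,1}: & \eta(b-1)=\alpha,\ \eta(b-c)=\alpha,\\
x_2\in S_{-1,-1}: & \eta(b+1)=-\alpha,\ \eta(b+c)=-\alpha,\\
x_3\in S_{1,-1}: & \eta(b-1)=-\alpha,\ \eta(b+c)=\alpha,\\
x_4\in S_{-1,1}: & \eta(b+1)=\alpha,\ \eta(b-c)=-\alpha.
\end{array}
\]
A direct inspection shows that four of the six unordered pairs impose contradictory values on one of $\eta(b\pm1)$ or $\eta(b\pm c)$ (e.g.\ $\{x_1,x_3\}$ forces $\eta(b-1)=\pm\alpha$ simultaneously), leaving only $\{x_1,x_2\}$ and $\{x_3,x_4\}$ as possibly compatible pairs; but these two pairs are in turn mutually incompatible (again via $\eta(b-1)$). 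Hence at most two of the four candidates survive for generic~$b$.

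The remaining work, which is the main obstacle because it is combinatorially delicate rather than conceptually hard, is to handle the boundary values $b\in\{1,-1,\pm c\}$, where some of the quantities $b\pm 1, b\pm c$ vanish (so $\eta$ is $0$) and where $x\in\{0,-1\}$ may itself be a solution. For each such $b$ I would observe that two of the $x_i$ automatically drop out (they become $0$ or $-1$) while the surviving pair of candidates still satisfies an incompatibility of the type above; together with the one possible solution at $0$ or $-1$, the total remains $\le 2$. Combining with the $a=0$ bound from the first paragraph yields $_c\Delta_F\le 2$ under the moreover hypothesis.
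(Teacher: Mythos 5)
Your proposal is correct and follows essentially the same route as the paper's proof: reduce to $a\in\{0,1\}$ via Lemma~\ref{power}, use $x^{(p^n+1)/2}=\eta(x)x$ to linearize the equation on each of the four sign classes $S_{i,j}$ (yielding the same four candidates), and obtain the refined bound from quadratic-character incompatibilities between the classes, with the boundary values $b\in\{1,-1,\pm c\}$ handled by noting that two candidates collapse onto $0$ or $-1$. The boundary-case verification you defer does check out exactly as you describe (in each case the surviving pair of candidates is one of the contradictory pairs), so no gap remains.
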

\begin{proof}

By Remark \ref{eta}, if $y\in\F_{p^n}^*$, then $y^\frac{p^n+1}{2}=y^{\frac{p^n-1}{2}+1}=\eta(y)y$, so, for any $b\in\gf(p^n)$, if $x\in\gf(p^n)^\sharp$ is a solution of $\Delta(x)=(x+1)^d-cx^d=b$, then $x$ satisfies
\begin{equation}\label{eqnpn+1over21}
\eta(x+1)(x+1)-c\eta(x)x=b.
\end{equation}
We distinguish four (disjoint) cases:

\begin{table}[H]
\centering
\begin{tabular}{|l||c|c|c|c|c|}
\hline
Set&$\eta(x+1)$&$\eta(x)$&$Equation (\ref{eqnpn+1over21})$ & $x$& $x+1$\\
[0.5ex]
\hline
$S_{1,1}$&1&1&$x+1-cx=b$&$\frac{-b+1}{c-1}$&$\frac{-b+c}{c-1}$\\
\hline
$S_{-1,-1}$&$-1$&$-1$&$-x-1+cx=b$&$\frac{b+1}{c-1}$&$\frac{b+c}{c-1}$\\
\hline
$S_{1,-1}$&1&$-1$&$x+1+cx=b$&$\frac{b-1}{c+1}$&$\frac{b+c}{c+1}$\\
\hline
$S_{-1,1}$&$-1$&1&$-x-1-cx=b$&$\frac{-b-1}{c+1}$&$\frac{-b+c}{c+1}$\\
\hline
\end{tabular}
\end{table}

%
%
%
As we see from the table, Equation (\ref{eqnpn+1over21}) has at most one solution in each of the $S_{i,j}$'s.
We have $\delta(b)\leq4$ for $b\neq1,\pm c$ since $\Delta(0)=1$ and $\Delta(-1)=c$ or $-c$. Note that $\Delta(x)=1$ and $\Delta(x)=c$ have no solutions in $S_{1,1}$, and $\Delta(x)=-c$ has no solutions in $S_{1,-1}$, and so, $\delta(0),\delta(\pm c)\leq 4$. Then $_c\Delta_F \leq 4$ follows by Lemma \ref{power} and $\gcd(\frac{p^n+1}{2},p^n-1)\leq2$.

Now we assume that $p^n \equiv 1 \pmod 4$ and $\eta(\frac{1-c}{1+c})=1$, then $\eta(-1)=1$. For fixed $b\neq1,\pm c$, if (\ref{eqnpn+1over21}) has solutions in $S_{1,1}$, then $\eta(\frac{-b+c}{c-1})=1$. If (\ref{eqnpn+1over21}) has solutions in $S_{-1,1}$, then $\eta(\frac{-b+c}{c-1})=-1$. We conclude that (\ref{eqnpn+1over21}) cannot have solutions in $S_{1,1}$ and $S_{-1,1}$ simultaneously. Similarly, we can prove that (\ref{eqnpn+1over21}) cannot have a solution in $S_{-1,-1}$ and $S_{1,-1}$, simultaneously. That means $\delta(b)\leq 2$ for $b\neq1,\pm c$. It can be verified that $\Delta(x)=c$ (respectively, $\Delta(x)=-c$) has no solution in $S_{1,1}$ and $S_{-1,1}$, simultaneously (respectively, in $S_{-1,-1}$ and $S_{1,-1}$), and thus, $\delta(c),\delta(-c)\leq2$. Also, \eqref{eqnpn+1over21} cannot have solutions in both $S_{1,1}$ and $S_{1,-1}$ for $b=1$. Then $\delta(1)\leq2$ since $\eta(-1)=1$, as $\Delta(x)=1$ has no solutions in $S_{-1,-1}$. Hence $_c\Delta_F \leq 2$.
\end{proof}

In \cite{HS97}, the authors studied the differential uniformity of the power functions $x^{\frac{p^n+3}{2}}$ and $x^{\frac{p^n-3}{2}}$ over $\gf(p^n)$. We consider their $c$-differential uniformity when $c=-1$. For $p=3$, the power function $x^\frac{3^n+3}{2}$ is equivalent to $x^\frac{3^{n-1}+1}{2}$, which was studied in Theorem \ref{plusone} when $n$ is odd. We complete those results to the case that $n$ is even in
the theorem below.

\begin{thm}
 \label{3^3+3over2}
 Let $G(x)=x^{\frac{3^n+3}{2}}$ on $\F_{3^n}$, $n\geq 2$. If $c=-1$, then $G$ is APcN if $n$ is even.
\end{thm}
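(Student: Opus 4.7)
The plan is to reduce the problem to Theorem~\ref{plusone} by exploiting the factorization of the exponent in characteristic $3$. Writing $d=(3^n+3)/2=3\cdot(3^{n-1}+1)/2$, we have $G(x)=F(x)^3$ where $F(x)=x^{(3^{n-1}+1)/2}$. Here $(3^{n-1}+1)/2$ is an integer because $n$ is even, so $n-1$ is odd and $3^{n-1}+1$ is even.

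First I would establish the key identity: for every $a\in\F_{3^n}$,
\[
\Delta_{-1,a}G(x)=G(x+a)+G(x)=F(x+a)^3+F(x)^3=\bigl(F(x+a)+F(x)\bigr)^3=\bigl(\Delta_{-1,a}F(x)\bigr)^3,
\]
using the Frobenius identity $u^3+v^3=(u+v)^3$ valid in characteristic $3$. Because $y\mapsto y^3$ is a bijection of $\F_{3^n}$, for each $b\in\F_{3^n}$ the equations $\Delta_{-1,a}G(x)=b$ and $\Delta_{-1,a}F(x)=b^{1/3}$ have the same solution set, and consequently ${}_{-1}\Delta_G={}_{-1}\Delta_F$.

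Next I would apply Theorem~\ref{plusone} with $p=3$ and $k=n-1$, so that $F$ takes the form $x^{(p^k+1)/2}$. Since $n$ is even, $n-1$ is odd, whence $\gcd(2n,n-1)=\gcd(2,n-1)=1$ and $2n/\gcd(2n,n-1)=2n$ is even. This places us in the ``otherwise'' branch of Theorem~\ref{plusone}, yielding
\[
{}_{-1}\Delta_F=\frac{3^{\gcd(n-1,n)}+1}{2}=\frac{3+1}{2}=2.
\]
Combining with the previous step gives ${}_{-1}\Delta_G=2$, so $G$ is AP$c$N.

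The only step I would expect to require extra care is the boundary value $n=2$, since Theorem~\ref{plusone} is stated for $n\geq 3$; for $n=2$ the function $F$ collapses to $x^2$ on $\F_9$, for which the claim ${}_{-1}\Delta_F=2$ follows from a one-line inspection of the quadratic $(x+a)^2+x^2$ (whose solution set in $x$ for a fixed $a$ and right-hand side has size at most $2$), so the reduction still goes through.
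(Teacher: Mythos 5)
Your proof is correct, but it takes a genuinely different route from the paper's. The paper argues directly with Dickson polynomials: writing $z=y+y^{-1}$ with $y\in\F_{3^{2n}}$, it shows $h_{-1}(z)=2\,D_{(3^n+3)/2}(z)$, computes $\gcd\bigl(\tfrac{3^n+3}{2},3^{2n}-1\bigr)=2$ for even $n$ via Lemma~\ref{lem:gcd}, and then invokes the value-set theorem of Chou, Gomez-Calderon and Mullen with $m=\bar{\ell}=2$ to conclude that every preimage has size at most $2$. You instead factor the exponent as $\tfrac{3^n+3}{2}=3\cdot\tfrac{3^{n-1}+1}{2}$, use the Frobenius identity $u^3+v^3=(u+v)^3$ together with $(-1)^3=-1$ to obtain ${}_{-1}\Delta_G={}_{-1}\Delta_F$ for $F(x)=x^{(3^{n-1}+1)/2}$, and then quote Theorem~\ref{plusone} with $k=n-1$; your parity check ($2n/\gcd(2n,n-1)=2n$ even, $\gcd(n-1,n)=1$) correctly lands in the branch giving uniformity exactly $\tfrac{3+1}{2}=2$. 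This is a legitimate and more economical argument --- it shows that for even $n\geq 4$ the theorem is essentially a corollary of Theorem~\ref{plusone}, something the paper's own preamble half-observes (it notes the equivalence of $x^{(3^n+3)/2}$ with $x^{(3^{n-1}+1)/2}$) but exploits only for odd $n$ --- and your separate treatment of $n=2$, where the hypothesis $n\geq 3$ of Theorem~\ref{plusone} fails and $F$ degenerates to $x^2$, closes the only boundary gap. Be aware, though, that the two proofs are not logically independent: the ``otherwise'' branch of Theorem~\ref{plusone} is itself established with the same Dickson value-set theorem, so your argument repackages rather than replaces that machinery. (A tiny quibble: $3^{n-1}+1$ is even for every $n$, not because $n-1$ is odd; this affects nothing.)
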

\begin{proof}
By similar arguments as before, we consider the following (where $z=y+y^{-1}, y\in \F_{3^{2n}}$),
\allowdisplaybreaks
 \begin{align*}
 h_c(z)&=\left(y+y^{-1}-2\right)^{\frac{3^n+3}{2}}-c\left(y+y^{-1}+2\right)^{\frac{3^n+3}{2}}\\
 &= \frac{\left(y-1\right)^{3^n+3}-c\left(y+1\right)^{3^n+3}}{y^{\frac{3^n+3}{2}}}\\
 &= \frac{y^{3^n+3}-y^{3^n}-y^3+1-c\left(y^{3^n+3}+y^{3^n}+y^3+1\right)}{y^{\frac{3^n+3}{2}}}\\
&= \frac{(1-c) \left(y^{3^n+3}+1\right)-(1+c)(y^{3^n}+y^3)}{y^{\frac{3^n+3}{2}}}\\
&=(1-c) D_{\frac{3^n+3}{2}}(z)-(1+c) D_{\frac{3^n-3}{2}}(z).
 \end{align*}

 If $c=-1$, then $h_c(z)= 2 \,D_{\frac{3^n+3}{2}}(z)$. For even $n$,  $\gcd\left(\frac{3^n+3}{2}, 3^{2n}-1 \right)=2$  can be seen from the following argument:
\begin{align*}
\gcd\left(\frac{3^n+3}{2}, 3^{2n}-1 \right)&=\frac{1}{2}\gcd(3^n+3,3^{2n}-1)\\
& =\frac{1}{2}\gcd(3^n+3,3^{2n}-1-(3^n+3)(3^n-1))\\
& =\frac{1}{2}\gcd(3^n+3,2-2\cdot3^n)\\
& =\frac{1}{2}\gcd(3^{n-1}+1,3^n-1).
\end{align*}
 Now, by Lemma \ref{lem:gcd}, $\gcd(3^{n-1}+1,3^n-1)=3^{\gcd(n-1,n)}+1=4$ since $\frac{n}{\gcd(n-1,n)}=n$ is even, which implies our claim.

Let $n$ be even:
 if $2\leq\ell=\max\left\{\left|D^{-1}_{\frac{{3^n+3}}{2}}(b)\right|: b\in \F_{p^n}\right\}$, then $_{-1}\Delta_F=\ell$. By Theorem~9 of~\cite{CGM88}, supposing $2^r||(3^{2n}-1)$, then, for $x_0\in\F_{3^n}$,
 \begin{equation*}
\label{eq:Dickson}
|D_d^{-1}(D_d(x_0))|=
\begin{cases}
m, &\mbox{ if }\eta(x_0^2-4)=1,\,D_d(x_0)\neq\pm2\\
\bar{\ell}, &\mbox{ if }\eta(x_0^2-4)=-1,\,D_d(x_0)\neq\pm2\\
\frac{m}{2}, &\mbox{ if }\eta(x_0^2-4)=1,\,2^t||d,\,1\leq t\leq r-2,D_d(x_0)=-2\\
\frac{\bar{\ell}}{2}, &\mbox{ if }\eta(x_0^2-4)=-1,\,2^t||d,\,1\leq t\leq r-2,D_d(x_0)=-2\\
\frac{m+\bar{\ell}}{2}, &\mbox{ otherwise,}
\end{cases}
\end{equation*}
where $m=\gcd(\frac{3^n+3}{2},3^n-1)=\frac{1}{2}\gcd(3^n+3,3^n-1)=\frac{1}{2}\gcd(3(3^{n-1}+1),3^n-1)=2,\,\bar{\ell}=\gcd(\frac{3^n+3}{2},3^n+1)=2$, all under $n$ being even. Thus, $\ell=2$.
\end{proof}

In the following, we discuss the $c$-differential uniformity of $x^{\frac{p^n+3}{2}}$ for $c=-1$ and $p>3$.

\begin{thm}
\label{pn+3over2}
Let $F(x)=x^d$ be a power function over $\gf(p^n)$, where $p>3$ is an odd prime and $d=\frac{p^n+3}{2}$. For $c=-1$, $_c\Delta_F \leq 4$ if $p^n\equiv 1 \pmod 4$ and $_c\Delta_F \leq 3$ if $p^n\equiv 3 \pmod 4$.
\end{thm}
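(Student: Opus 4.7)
The plan is to mimic the strategy of Theorem~\ref{pn+1over2}. Since $d=\frac{p^n+3}{2}=\frac{p^n-1}{2}+2$, Remark~\ref{eta} gives $x^d=\eta(x)\,x^2$ for every $x\neq 0$, and so, with $c=-1$, the equation $\Delta_d(x)=b$ becomes on $\gf(p^n)^\sharp$
\[
\eta(x+1)(x+1)^2+\eta(x)\,x^2=b.
\]
Partitioning by the characters $(\eta(x+1),\eta(x))$, this collapses to $2x^2+2x+1=\pm b$ on $S_{\pm 1,\pm 1}$ (quadratic) and to $2x+1=\pm b$ on $S_{\pm 1,\mp 1}$ (linear). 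Write $n_1,n_2,n_3,n_4$ for the solution counts in $S_{1,1},S_{-1,-1},S_{1,-1},S_{-1,1}$, respectively. The crude bounds $n_1,n_2\le 2$ and $n_3,n_4\le 1$ I would refine using the involution $\sigma(x)=-1-x$, which fixes both quadratics, sends $2x+1$ to $-(2x+1)$, and satisfies $\eta(\sigma(x))=\eta(-1)\eta(x+1)$ and $\eta(\sigma(x)+1)=\eta(-1)\eta(x)$.

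When $p^n\equiv 3\pmod 4$ so that $\eta(-1)=-1$, $\sigma$ swaps $S_{1,1}\leftrightarrow S_{-1,-1}$, so the two roots of each quadratic lie in different subsets, whence $n_1,n_2\le 1$. For the linear equations, a solution $x=(b-1)/2$ in $S_{1,-1}$ forces $\eta((b-1)/2)=-1$, whereas a solution $x=-(b+1)/2$ in $S_{-1,1}$ forces $\eta((1-b)/2)=-1$, which, under $\eta(-1)=-1$, amounts to $\eta((b-1)/2)=1$; the two requirements are incompatible, so $n_3+n_4\le 1$. Hence $\gf(p^n)^\sharp$ contributes at most $3$ solutions. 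Since $d$ is odd here, $\Delta_d(0)=1$ and $\Delta_d(-1)=-1$, and direct inspection of $b=\pm 1$ shows that the excluded points $x\in\{0,-1\}$ do not push any count above $3$. Finally, $\gcd(d,p^n-1)$ divides $\gcd(2d-(p^n-1),\,p^n-1)=\gcd(4,p^n-1)=2$, so by Lemma~\ref{power}, $_{-1}\Delta_F\le 3$.

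When $p^n\equiv 1\pmod 4$ so that $\eta(-1)=1$, $\sigma$ preserves $S_{1,1}$ and $S_{-1,-1}$, yielding $n_1,n_2\in\{0,2\}$; moreover, $\sigma$ swaps $S_{1,-1}\leftrightarrow S_{-1,1}$ and sends the equation $2x+1=b$ to $2x+1=-b$, giving a bijection between $S_{1,-1}$- and $S_{-1,1}$-solutions of the same differential equation, so $n_3=n_4\in\{0,1\}$. The crucial incompatibility is: any $x\in S_{1,1}$ with $2x^2+2x+1=b$ satisfies $2x(x+1)=b-1$ and $\eta(x)\eta(x+1)=1$, forcing $\eta((b-1)/2)=1$, i.e.\ $\eta(b-1)=\eta(2)$; a solution $x=(b-1)/2\in S_{1,-1}$, on the other hand, demands $\eta((b-1)/2)=-1$, i.e.\ $\eta(b-1)=-\eta(2)$. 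Hence $n_1\cdot(n_3+n_4)=0$, and the total contribution from $\gf(p^n)^\sharp$ is bounded by $\max(n_1,\,n_3+n_4)+n_2\le 2+2=4$. When $b=1$, $d$ is even, so $\Delta_d(0)=\Delta_d(-1)=1$ contribute two extras; a direct check shows $n_1=n_3=n_4=0$ and $n_2\le 2$, keeping the total at most $4$. Since $\gcd(d,p^n-1)\mid 4$, Lemma~\ref{power} delivers $_{-1}\Delta_F\le 4$. The principal technical obstacle throughout is the careful bookkeeping of the character constraints and the handling of the edge cases $b\in\{0,\pm 1\}$.
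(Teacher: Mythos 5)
Your proposal is correct and follows essentially the same route as the paper: the reduction via Remark~\ref{eta} to $\eta(x+1)(x+1)^2+\eta(x)x^2=b$, the case split over the four sets $S_{i,j}$ with the same quadratic/linear equations, the quadratic-character incompatibilities between the cases, the root-pairing $x\mapsto -1-x$ (which the paper uses implicitly where you make it an explicit involution $\sigma$), the edge cases $b=\pm1$ at $x\in\{0,-1\}$, and the $\gcd(d,p^n-1)$ bound fed into Lemma~\ref{power}. The only differences are cosmetic choices of which pairwise incompatibilities to invoke, and your ``direct inspection'' claims at $b=\pm1$ are exactly the checks the paper also performs (and they do hold).
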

\begin{proof}It is easy to see that $\gcd(d,p^n-1)=1$, when $p^n\equiv 3 \pmod 4$ and $\gcd(d,p^n-1)=2$ or $4$ when $p^n\equiv 1 \pmod 4$.
By Remark \ref{eta}, if $y\in\F_{p^n}^*$, then $y^\frac{p^n+3}{2}=y^{\frac{p^n-1}{2}+2}=\eta(y)y^2$, so, for $b\in\gf(p^n)$, if $x\in\gf(p^n)^\sharp$ is a solution of $\Delta(x)=b$, then $x$ satisfies
\begin{equation}\label{eqnpn+3over21}
\eta(x+1)(x+1)^2+\eta(x)x^2=b.
\end{equation}
We distinguish the following (disjoint) four cases.
\begin{table}[H]
\centering
\begin{tabular}{|l||c|c|c|c|}
\hline
&Set&$\eta(x+1)$&$\eta(x)$&$Equation (\ref{eqnpn+3over21})$ \\
[0.5ex]
\hline
Case I&$S_{1,1}$&1&1&$x(x+1)=\frac{b-1}{2}$\\
\hline
Case II&$S_{-1,-1}$&$-1$&$-1$&$x(x+1)=\frac{-b-1}{2}$\\
\hline
Case III&$S_{1,-1}$&1&$-1$&$x=\frac{b-1}{2}$, $x+1=\frac{b+1}{2}$\\
\hline
Case IV&$S_{-1,1}$&$-1$&1&$x=\frac{-b-1}{2}$, $x+1=\frac{-b+1}{2}$\\
\hline
\end{tabular}
\end{table}

%
%
%

First, we assert that $\Delta(x)=b$ cannot have solutions in $S_{1,1}$ and $S_{1,-1}$, simultaneously, for fixed $b\in\gf(p^n)$. Suppose, on the contrary, then $\eta(\frac{b-1}{2})=-1$ since it is a solution in $S_{1,-1}$, and $\eta(\frac{b-1}{2})=1$ since \eqref{eqnpn+3over21} has solutions in $S_{1,1}$, which is a contradiction.

If $p^n\equiv 1 \pmod 4$, then $\eta(-1)=1$. Then $\Delta(x)=b$ cannot have solutions in $S_{1,1}$ and $S_{-1,1}$ simultaneously. Otherwise, we obtain $\eta(\frac{-b+1}{2})=-1$ from Case~IV and $\eta(\frac{b-1}{2})=1$ from Case~I, which is a contradiction.  This means that for any $b$, the solutions of (\ref{eqnpn+3over21}) in $\gf(p^n)^\sharp$ is at most $4$. It is easy to see that $\Delta(0)=\Delta(-1)=1$, since $d$ is even. For $b=1$, it can be verified that $\Delta(x)=1$ has no solution in $S_{1,1},S_{1,-1}$ and $S_{-1,1}$, and so, $\delta(1)\leq4$. This, along with $\gcd(d,p^n-1)=2$ or $4$ implies that $_c\Delta_F\leq4$.

If $p^n\equiv 3 \pmod 4$, then $\eta(-1)=-1$. If $x_1\in S_{1,1}$ is a solution of $x(x+1)=\frac{b-1}{2}$, then the other solution is $-x_1-1$. Note that $x_1$ and $-x_1-1$ cannot be in $S_{1,1}$ simultaneously, so (\ref{eqnpn+3over21}) has at most $1$ solution in $S_{1,1}$. Similarly, (\ref{eqnpn+3over21}) has at most $1$ solution in $S_{-1,-1}$. Then $\Delta(x)=b$ has at most $3$ solutions in $\gf(p^n)^\sharp$ since $\Delta(x)=b$ cannot have solutions in both $S_{1,1}$ and $S_{1,-1}$. It is clear that $\Delta(0)=1$ and $\Delta(-1)=-1$, since $d$ is odd. For $b=1$ and $b=-1$, it can be verified that $\Delta(x)=1$ and $\Delta(x)=-1$ have no solution in $\gf(p^n)^\sharp$. We conclude that $_c\Delta_F\leq3$.
\end{proof}

It has been shown that $x^{3^n-3}$ is an APN power mapping of
$\gf(3^n)$ when $n>1$ is odd (see \cite{HS97,HRS99}). In \cite{XZLH20}, it was shown that the differential uniformity of $x^{3^n-3}$ over
$\gf(3^n)$ is respectively 4 or 5 when $n=2\pmod 4$ or $n=0\pmod 4$. We show in the next
theorem that this power mapping (which is equivalent to $x^{3^{n-1}-1}$) has also low $c$-differential
uniformity for $c\in\{0,-1\}$.

  \begin{thm}
  \label{thm:pn-3}
  Let $p=3$, $n\geq 2$ and $F(x)=x^{3^n-3}$ on $\F_{3^n}$.  If $c=-1$, the $c$-differential uniformity of  $F$ is $6$ for  $n\equiv 0\pmod 4$ and $4$, otherwise.
  If $c=0$,  the $c$-differential uniformity of $F$ over $\F_{3^n}$ is $2$.
   If $c\neq 0,\pm1$, the $c$-differential uniformity of $F$  is
  $\leq 5$. Moreover, the   $c$-differential uniformity of $4$ is attained for some $c$, for all $n\geq 3$, and the $c$-differential uniformity of $5$ is attained (at least) for all positive $n\equiv 0\pmod 4$.
  \end{thm}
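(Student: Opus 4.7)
The common reduction proceeds via Lemma~\ref{power}: the contribution from $a = 0$ is $\gcd(d, 3^n - 1) = 2$, and for $a \neq 0$ we rescale to $a = 1$. Since $F(x) = x^{-2}$ on $\F_{3^n}^\ast$ and $d = 3^n - 3$ is even, the equation $(x+1)^d - c x^d = b$ admits the trivial solutions $x = 0$ (iff $b = 1$) and $x = -1$ (iff $b = -c$); for $x \notin \{0, -1\}$ it is equivalent, after clearing denominators, to
\[
P_{c,b}(x) := b\, x^2(x+1)^2 + c(x+1)^2 - x^2 = 0,
\]
a polynomial of degree $\leq 4$. Since $P_{c,b}(0) = c$ and $P_{c,b}(-1) = -1$ in characteristic $3$, the trivial solutions are disjoint from the roots of $P_{c,b}$ whenever $c \neq 0$, so $\delta(b)$ equals the number of $\F_{3^n}$-roots of $P_{c,b}$ plus at most $2$ trivial contributions.

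For $c = 0$, the problem reduces to inverting the 2-to-1 map $y \mapsto y^{-2}$, giving $_c\Delta_F = 2$. For $c \neq 0, \pm 1$, the quartic contributes at most $4$ roots, and the trivial values $b = 1$ and $b = -c$ are distinct (since $c \neq -1$), each adding at most one extra solution, yielding $_c\Delta_F \leq 5$.

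The heart of the proof is $c = -1$. Under the substitution $y = x(x+1)$, the identity $x^2 + (x+1)^2 = 2y + 1$ reduces the equation $\frac{1}{(x+1)^2} + \frac{1}{x^2} = b$ to the quadratic $b y^2 + y - 1 = 0$, and each $y$ lifts through $x^2 + x - y = 0$ to $0, 1$, or $2$ values of $x$, according as $1 + y$ is a nonsquare, zero, or nonzero square. For $b = 1$ the two roots are $y_{1,2} = 1 \mp i$ (with $i^2 = -1$), and $1 + y_{1,2} = -1 \mp i$. Since $(-1 - i)^2 = -i$ has multiplicative order $8$ in $\F_9^\ast$, $-1 - i$ is a square in $\F_{3^n}^\ast$ iff $16 \mid 3^n - 1$, iff $n \equiv 0 \pmod 4$. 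Including the trivial solutions $x = 0, -1$, this gives $\delta(1) = 6$ when $n \equiv 0 \pmod 4$ and $\delta(1) = 2$ otherwise. For $b \neq 1$ the quartic bound yields $\delta(b) \leq 4$, so $_c\Delta_F \leq 6$. To show equality to $4$ when $n \not\equiv 0 \pmod 4$, I plan to exhibit some $b$ with both $1 + y_i = s(s \mp 1)/b$ (where $s^2 = 1 + b$) nonzero squares, using the identity $(1 + y_1)(1 + y_2) = (1+b)/b$ to turn this into multiplicative squareness conditions on $b$ that can be counted via a quadratic-character sum, or satisfied explicitly in a subfield.

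For the attainment claims with $c \neq 0, \pm 1$: when $n \not\equiv 0 \pmod 4$, $c = -1$ already attains uniformity $4$, so the non-trivial case is $n \equiv 0 \pmod 4$. Here I would specialise $b = 1$ and determine when $P_{c,1}(x) = 0$ has three (resp.\ four) distinct $\F_{3^n}$-roots, corresponding to uniformity $4$ (resp.\ $5$) in combination with the trivial $x = 0$; once $\F_{3^4} \subseteq \F_{3^n}$ supplies enough roots of unity and squares, a base-field search in $\F_{3^4}$ produces a suitable $c$, and a similar search in $\F_{3^{\gcd(n,4)}}$ handles the remaining cases. The main obstacle is precisely the attainment side rather than the bounds: certifying that the several squareness conditions (on $1 + b$, on each $1 + y_i$, and on the quartic's discriminants) can be simultaneously met, which I expect to handle by combining explicit base-field constructions with character-sum density arguments for large $n$.
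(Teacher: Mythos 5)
Your reduction and upper bounds are sound, and for $c=-1$ your substitution $y=x(x+1)$, which turns $(x+1)^{-2}+x^{-2}=b$ into $by^2+y-1=0$ together with the lifting criterion on $\eta(1+y)$, is arguably cleaner than the paper's route (the paper substitutes $x=y+a$ into the quartic $b\,x^2(x+a)^2+c(x+a)^2-x^2=0$ and analyzes the resulting quartic in $y$ directly). Your conclusion $\delta(1)=6$ for $n\equiv 0\pmod 4$ and $\delta(1)=2$ otherwise is correct, modulo a small slip: it is $-1-i$, not $-i=(-1-i)^2$, that has multiplicative order $8$ in $\F_9^*$ (the element $-i$ has order $4$); the criterion $16\mid 3^n-1\iff n\equiv 0\pmod 4$ still stands. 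The $c=0$ case and the bound $\leq 5$ for $c\neq 0,\pm1$ are fine.

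The genuine gap is that all three lower-bound (attainment) claims are left as plans, and these are the bulk of the theorem. (i) For $c=-1$ and $n\not\equiv 0\pmod 4$ you must actually exhibit a $b$ with $\delta(b)=4$, i.e.\ with $1+b$ a nonzero square and both $s(s-1)/b$ and $s(s+1)/b$ nonzero squares, where $s^2=1+b$; you state the identity $(1+y_1)(1+y_2)=(1+b)/b$ but never produce such a $b$ (the paper does this explicitly via $b=a$ with $a=d^4-1$). (ii) The clause ``uniformity $4$ is attained for some $c$, for all $n\geq 3$'' sits inside the $c\neq 0,\pm 1$ case, so invoking $c=-1$ does not discharge it even when $n\not\equiv 0\pmod 4$; the paper proves it by taking $a=1$, $b=-1$, $c=\alpha^2-1$ and reducing to the existence of $\alpha$ with $\alpha^2-\alpha$ and $\alpha^2+\alpha$ simultaneously nonzero squares, settled by a Jacobsthal-sum count valid for all $n\geq 3$. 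Your proposed fallback of a search in $\F_{3^{\gcd(n,4)}}$ cannot work for odd $n$, where that subfield is $\F_3$. (iii) The attainment of $5$ for $n\equiv 0\pmod 4$ needs an explicit $(a,b,c)$ with $c\neq 0,\pm1$, four quartic roots, and one trivial root; the paper forces the quartic into the shape $(y^2+1)^2+(1+d)(y+B)^2=0$ with parameters found in $\F_{3^4}$ and then embeds. Your plan to search $\F_{3^4}$ is the right idea but is not carried out. Until (i)--(iii) are supplied, your argument establishes only the upper bounds and the exact value $6$ for $c=-1$, $n\equiv 0\pmod 4$.
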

  \begin{proof}
  For $a,b\in\F_{3^n}$, we look at the equation $F(x+a)-cF(x)=b$, that is,
  \begin{equation}
  \label{eq:thm6}
  (x+a)^{3^n-3}- c\, x^{3^n-3}=b.
  \end{equation}
  Note that, for $y=0$, $y^{3^n-3}=0$, while, if $y\in\F_{3^n}^*$, then $y^{3^n-3}=y^{-2}$. This fact will be used in several parts of the proof.

  If $c=0$, the equation is then $(x+a)^{3^n-3}=b$. If $a=b=0$, then we get the unique solution $x=0$. If $a=0,b\neq 0$, the equation is then $bx^2=1$, which has two solutions if $b$ is a square and none, otherwise. If $a\neq 0,b=0$, the solution is $x=-a$. If $ab\neq 0$, the equation is then $b(x+a)^2=a$, which has two solutions if $a/b$ is a nonzero perfect square (always realizable). Summarizing, the function is APcN with respect to $c=0$. For the remainder of the proof, we assume that $c\neq 0,1$.

  If $a=b=0$, then $x=0$ is the only solution for Equation~\eqref{eq:thm6}.
  If $a=0$ and $b\neq 0$, then Equation~\eqref{eq:thm6} becomes $(1-c) x^{3^n-3}=b$. Surely, $x\neq 0$ and so, the equation becomes $bx^2=(1-c)$. Taking $\alpha\neq 0$ a fixed element of $\F_{3^n}$ and $b=\frac{1-c}{\alpha^2}$, then $x=\pm \alpha$ are solutions for this equation (observe that if $b\neq 0$ and $\frac{1-c}{b}$ is not a square in $\F_{3^n}$ there are no solutions for $(1-c) x^{3^n-3}=b$; for $c$ fixed, there are  $\frac{3^n-1}{2}$ nonzero values of $b$ such that $\frac{1-c}{b}$ is a square in $\F_{3^n}$).

  If $a\neq 0$ and $b=0$, then $0\neq x\neq -a$, and Equation~\eqref{eq:thm6} becomes $(x+a)^{-2}= cx^{-2}$, that is, $\left(\frac{x}{x+a}\right)^2=c$, which has two solutions depending on whether $c$ is a nonzero perfect square or not (there are $\frac{3^n+1}{2}$ such $c$'s).

  When $ab\neq 0$, we observe that $x=0$, $x=-a$ are solutions of Equation~\eqref{eq:thm6} if and only if $b=a^{-2}$, respectively, $b=-c a^{-2}$.
  Note that these can happen simultaneously if and only if $c=-1$.

  We now assume $ab\neq 0$, $x\neq 0,-a$.
 Equation~\eqref{eq:thm6} is therefore
 \allowdisplaybreaks
 \begin{align}
 & (x+a)^{-2}-c x^{-2}=b,\text{ that is,}\nonumber\\
 & b (x+a)^2 x^2 -x^2+c(x+a)^2=0,\text{ that is,}\nonumber\\
 & b x^4-b a x^3+b a^2 x^2-x^2+c x^2-ac x+c a^2=0,\nonumber\\
 &x^4-a  x^3+\frac{ba^2+c-1}{b}x^2-\frac{ac}{b}x+\frac{ca^2}{b}=0.\label{eq:thm6-quartic}
  \end{align}

   Using $x=y+a$ in~\eqref{eq:thm6-quartic}, we obtain
  \begin{equation}
  \label{eq:main7}
  y^4+\frac{b a^2+c-1}{b}\, y^2+\frac{a(c+1)}{b}\, y
 +\frac{b a^4+(c-1) a^2}{b}=0.
  \end{equation}
    Now, if $c=-1$ and $b=a^{-2}$, Equation~\eqref{eq:main7} is then
  \begin{equation}
  \label{eq:main7_1}
 y^4-a^2 y^2-a^4=0,
  \end{equation}
  whose discriminant (of the underlying quadratic) is $2a^4$. Now, we know that the underlying quadratic of Equation~\eqref{eq:main7_1} has two distinct roots if and only if the above discriminant is a nonzero square. It is undoubtedly nonzero for $a\neq 0$, and we know that $2=-1$ is a perfect square, say $-1=\imath^2$ in $\F_{3^n}$ if and only if $n$ is even. For the quartic to have four distinct roots in $\F_{3^n}$, obtained from $y^2= a^2(1\pm \imath)$ (these are the roots of the underlying quadratic), one needs $1\pm \imath$ to be a perfect square, which happens if $n\equiv 0\pmod 4$, and we provide the reason next.
The minimal polynomial of one of the roots, say $\sqrt{1+\imath}$, is $x^4+x^2-1$ which has the roots $\{\alpha^2 + 2, 2\alpha^3 + 2\alpha^2 + 2\alpha,2\alpha^2 + 1,\alpha^3 + \alpha^2 + \alpha \}$ in $\F_{3^4}$ (we took the primitive polynomial $x^4-x^3-1$ with $\alpha$ as one of the roots), and consequently in all extensions $\F_{3^n}$ of $\F_{3^4}$ with $n\equiv 0\pmod 4$, and no others. Further, if $n\equiv 2\pmod 4$ we do not get more roots from Equation~\eqref{eq:main7_1} under $c=-1,b=a^{-2}$, therefore, in this case we only get the solutions $x=0,-a$.

    If $c=-1$ and $b\neq a^{-2}$, we let $b=a$, obtaining the equation  $a^2 + a^5 + (1 + a^3) y^2 + a y^4=0$. The  discriminant of the underlying quadratic is $1+a^3=(1+a)^3$. Thus, if   $a=d^4-1$ for some random fixed $d\neq0$ such that $a\neq 0,1$, the previous equation has four solutions and so, the first claim is shown.

  If $c\neq \pm 1$, putting together the  potential solutions for Equation~\eqref{eq:main7} and $y=0$ or $y=-a$, we see that we cannot have more than $5$ solutions.

  We now show that the differential uniformity of $4$ is attained.
 For $a=1,b=-1$,
 the quartic~\eqref{eq:main7} becomes
 $
y^4-(1+c) y^2-(1+c)y-(1+c)=0,
 $
 which is equivalent to $y^4-(c+1) (y^2+y+1)=0$, and further, $y^4-(c+1)(y-1)^2=0$ (observe that $y\neq 0,1$).
 Thus, taking $c=\alpha^2-1$, for some $\alpha\neq 0,1$, then $\frac{y^2}{y-1}=\pm \alpha$, which is equivalent to
 \[
 y^2\mp\alpha y\pm \alpha=0.
 \]
 These pairs of equations will each have two roots  if and only if
 $\eta(\alpha^2\pm 4\alpha)=\eta(\alpha^2\pm \alpha)=1$ ($\eta$ is the quadratic character), that is, if $\alpha^2\pm \alpha$ are nonzero squares in $\F_{3^n}$.
 However, we do have an argument for the existence of such $\alpha$, in general. We use~\cite[Theorem 5.48]{LN96}, which states that if $f(x)=a_2 x^2+a_1 x+a_0$ is a polynomial in a finite field $\F_q$ of odd characteristic, $a_2\neq 0, d=a_1^2-4a_0a_2$, and $\eta$ is the quadratic character on $\F_q$, then the Jacobsthal sum
 \[
 \sum_{x\in\F_q} \eta(f(x))=\begin{cases}
 -\eta(a_2)&\text{ if }  d\neq 0\\
 (q-1)\eta(a_2)&\text{ if }  d= 0.
 \end{cases}
 \]
 First, we take $f(x)=x^2-x$, $d=1$, and so, the Jacobsthal sum becomes
 \begin{align*}
&   \sum_{x\in\F_{3^n}} \eta(x^2-x)=-\eta(1)=-1,\text {so,}\\
&-1=\eta(0)+\eta(1^2-1)+\sum_{x\in\F_{3^n},x\neq 0,1} \eta(x^2-x)=\sum_{x\in\F_{3^n},x\neq 0,1} \eta(x^2-x).
 \end{align*}
 Thus,
 $\displaystyle
 \sum_{x\in\F_{3^n},x\neq 0,1} \eta(x^2-x)=-1,
 $ and similarly,
  $\displaystyle
 \sum_{x\in\F_{3^n},x\neq 0,-1} \eta(x^2+x)=-1.
 $
   Let $N_1=|\{x|x\neq 0,\pm 1, \eta(x^2-x)=1\}$  and $N_2=|\{x|x\neq 0,\pm 1, \eta(x^2+x)=1\}$.
 We compute
 \begin{align*}
  -1&=\sum_{x\in\F_{3^n},x\neq 0,1} \eta(x^2-x)\\
  &=\eta((-1)^2-(-1))+\sum_{x\in\F_{3^n},x\neq 0,\pm 1} \eta(x^2-x)\\
  &=\eta(2)+\sum_{x\in\F_{3^n},x\neq 0,\pm 1} \eta(x^2-x)\\
  &=\eta(-1)+\sum_{x\in\F_{3^n},x\neq 0,\pm 1} \eta(x^2-x)=\eta(-1)+N_1-(3^n-3-N_1)\\
  -1&= \sum_{x\in\F_{3^n},x\neq 0,-1} \eta(x^2+x)\\
  &=\eta(1^2+1)+\sum_{x\in\F_{3^n},x\neq 0,\pm 1} \eta(x^2+x)\\
  &= \eta(-1)+\sum_{x\in\F_{3^n},x\neq 0,\pm 1} \eta(x^2+x)=\eta(-1)+N_2-(3^n-3-N_2).
 \end{align*}
 We therefore get
 $
 N_1=N_2=\frac{3^n-4-\eta(-1)}{2}.
 $
 The sets of $x\neq 0,\pm 1$ of cardinality $ \frac{3^n-4-\eta(-1)}{2}$ (equal to $\frac{3^n-3}{2}$ for $n$ odd and  $\frac{3^n-5}{2}$ for $n$ even) such that $x^2-x$, respectively, $x^2+x$ are squares, may still be disjoint. If they are not disjoint, we are done. Suppose now that the sets are disjoint; then,  we need to consider the Jacobsthal sum (again, using~\cite[Theorem 5.48]{LN96})
 \allowdisplaybreaks
 \begin{align*}
 & \sum_{x\in\F_{3^n},x\neq 0,\pm 1} \eta(x^2-x)\eta(x^2+x)
= \sum_{x\in\F_{3^n},x\neq 0,\pm 1} \eta((x^2-x)(x^2+x))\\
  &= \sum_{x\in\F_{3^n},x\neq 0,\pm 1}\eta(x^2(x^2-1))
   = \sum_{x\in\F_{3^n},x\neq 0,\pm 1} \eta(x^2)\eta(x^2-1)\\
  &= \sum_{x\in\F_{3^n},x\neq 0,\pm 1} \eta(x^2-1)
  =\sum_{x\in\F_{3^n}} \eta(x^2-1)-\eta(-1)=-1+\eta(-1),
 \end{align*}
 but that is impossible for $n\geq 3$, if $x^2-x$ and $x^2+x$ are never squares at the same time (given our prior counts, $N_1,N_2$).

 We now take $n$ to be an even integer.  We shall show that there are values of $c$ such that the $c$-differential uniformity is $5$. We take $b=a^{-2}$, so that $x=0$ is a solution of Equation (\ref{eq:thm6}).
 Under this condition,  the idea is to find, $A,B$ such that the quartic in $y$ can be written as $(y^2+1)^2+A (y+B)^2=0$, where $B$ is a perfect square.
 We let $a$ such that $a^6+a^2+2=0$. This equation has the solutions $g+1,2g+2$  in $\F_{3^2}$, where $g$ is the primitive root vanishing $X^2-X-1=0$, and consequently, by field embedding, it is solvable in $\F_{3^n}$ for all even $n$.  Further, we take  $c=da^{-2}$, where $d=\frac{2 +2 a^6}{1 +2 a^2 + 2 a^4}$. Our quartic in $y$ becomes
 $
 y^4+ d y^2 + (a^3 + a d) y  +a^2 d=0,
 $
 which we will write in the form
 \[
 (y^2 + 1)^2 + (1 + d) \left(y -\frac{ a^3+ad}{ 1+d}\right)^2=0.
 \]
 Observe that $\frac{ a^3+ad}{ 1+d}=-\frac{a^4+1}{a^3-a}$, when $d$ has the value we previously chose.
 Thus, assuming that $-(1+d)=-\frac{1 + a^4}{2 + a^2 + a^4}$ is a perfect square, say $\beta^2$ (we will check if that happens later),
 the equations will give the four solutions (we let $\imath$ be the solution to $X^2=-1$ in $\F_{3^2}$, and any other even extension of $\F_{3^2}$)
 \[\
 y^2+1= \pm\beta  \left(y +\frac{ a^4+1}{ a^3-a}\right),\text{ that is, }
 y^2\mp \beta y\mp \beta \frac{ a^4+1}{ a^3-a}+1=0.
\]
 These last equations will have two solutions each if and only if
 $\displaystyle
 \displaystyle \beta^2\pm   \beta \frac{ a^4+1}{ a^3-a}-1
 $
  is a perfect nonzero square. Using SageMath, we  quickly found values of $a$ satisfying the equation $a^6+a+2=0$ in $\F_{3^4}$  such that $-(1+d)=-\frac{1 + a^4}{2 + a^2 + a^4}$ and the expressions above are perfect squares, as well and, by field embedding, there are values of $a$ for every dimension divisible by $4$ where the last displayed expression is a square, as well.
  The theorem is shown.
   \end{proof}
   
   \begin{rem}
One referee suggested some alternative arguments for the fact  that there exists $\alpha\in\F_{p^n}^*$
for which both $\alpha^2-\alpha$ and $\alpha^2+\alpha$ are squares, in the latter part of our proof. Surely, $x^2+x=y^2$ and $x^2-x=z^2$
is equivalent to $x=(z^2-y^2)/2$ and $(z^2-y^2)^2/4+(z^2-y^2)/2=y^2$.  This last equation defines an absolutely irreducible quartic curve and therefore
there exist triples $(x,y,z)\in\F_{p^n}^3$ such that $x^2+x=y^2$ and $x^2-x=z^2$ and $xyz\neq0$. Our argument provides some idea on counts, as well.
\end{rem}

   \begin{rem}
  Our computations in SageMath revealed that there are other  values of the $c$-differential uniformity for the function in Theorem~\textup{\ref{thm:pn-3}}. In fact, if $c\neq \pm 1$ and $n=2$, then ${_c}\Delta_F=2;$  when $n=3$, we have ${_c}\Delta_F\in\{3,4\};$ for $n=4$, then ${_c}\Delta_F\in\{2,4,5\};$  for $n=5$, we  have ${_c}\Delta_F\in\{4\};$ if $n=6$, then ${_c}\Delta_F\in\{4,5\}$.
  \end{rem}

We finally present in the following two theorems two power mappings having low $c$-differential uniformity.

\begin{thm}\label{pn-3over2}Let $F(x)=x^d$ be a power function over $\gf(p^n)$, where $p$ is an odd prime and $d=\frac{p^n-3}{2}$. For $c=-1$, $_c\Delta_F \leq 4$.
\end{thm}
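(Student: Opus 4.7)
The plan is to extend the case-analysis used for Theorems~\ref{pn+1over2} and~\ref{pn+3over2}. By Lemma~\ref{power}, it suffices to bound $\delta_d(b)$ for all $b$, since a Euclidean reduction from $p^n-1=2d+2$ gives $\gcd(d,p^n-1)\in\{1,2\}$. Throughout I abbreviate Case A for $p^n\equiv 3\pmod 4$ (i.e.\ $\eta(-1)=-1$) and Case B for $p^n\equiv 1\pmod 4$ ($\eta(-1)=1$).

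For $x\in\gf(p^n)^\sharp$, Remark~\ref{eta} gives $x^d=\eta(x)/x$; hence $\Delta(x)=\eta(x+1)/(x+1)+\eta(x)/x$, and clearing denominators, $\Delta(x)=b$ becomes $\eta(x+1)\,x+\eta(x)(x+1)=bx(x+1)$. Splitting by the four sign patterns $(\eta(x+1),\eta(x))\in\{\pm 1\}^2$ (as in the tables of Theorems~\ref{pn+1over2} and~\ref{pn+3over2}) yields, on each $S_{i,j}$, a quadratic equation: $bx^2+(b-2)x-1=0$ on $S_{1,1}$, $bx^2+(b+2)x+1=0$ on $S_{-1,-1}$, $bx^2+bx+1=0$ on $S_{1,-1}$, and $bx^2+bx-1=0$ on $S_{-1,1}$. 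Let $n_{i,j}$ denote the number of roots of the corresponding equation actually lying in $S_{i,j}$; the goal reduces to $\sum n_{i,j}\le 4$.

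I establish two independent bounds. For $n_{1,1}+n_{-1,-1}\le 2$: the involution $x\mapsto -1-x$ bijects the roots of the $S_{1,1}$-equation to those of the $S_{-1,-1}$-equation (they share discriminant $b^2+4$); Vieta gives $\alpha_1\alpha_2=-1/b$ and $(\alpha_1+1)(\alpha_2+1)=1/b$, whence $\prod_i\eta(\alpha_i)\eta(\alpha_i+1)=\eta(-1/b^2)=\eta(-1)$. In Case A this forbids both $\alpha_i$ from being in $S_{1,1}$ simultaneously, forcing $n_{1,1}\le 1$ and, via the involution, $n_{-1,-1}\le 1$; in Case B the involution preserves $S$-type, so $n_{1,1}$ and $n_{-1,-1}$ count disjoint subsets of $\{\alpha_1,\alpha_2\}$. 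For $n_{1,-1}+n_{-1,1}\le 2$: a root $\gamma$ of the $S_{1,-1}$-equation satisfies $\gamma(\gamma+1)=-1/b$, so lying in $S_{1,-1}$ forces $\eta(-b)=-1$; the $S_{-1,1}$-equation analogously forces $\eta(b)=-1$. These conditions are mutually exclusive in Case A. In Case B, the pairing $x\mapsto -1-x$ on the root set of each equation sends a root in $S_{1,-1}$ to one in $S_{-1,1}$, so each equation contributes at most one valid solution.

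Combining, $\delta_d(b)\le 4$ on $\gf(p^n)^\sharp$. The boundary points $x\in\{0,-1\}$ satisfy $\Delta(0)=1$ and $\Delta(-1)=(-1)^d$; the main obstacle is Case A, where $d$ is even and both boundary points feed $\delta_d(1)$, naively allowing $\delta_d(1)\le 6$. I close the gap by showing $n_{1,-1}=n_{-1,1}=0$ when $b=1$: the roots of $x^2+x+1=0$ are primitive cube roots of unity $\gamma$, and since $\gamma=(\gamma^2)^2$ is a square while $\gamma+1=-\gamma^2$, one has $\eta(\gamma+1)=\eta(-1)=-1$ in Case A, placing $\gamma$ in $S_{-1,1}$ rather than $S_{1,-1}$; meanwhile the roots $\delta$ of $x^2+x-1=0$ would need $\eta(\delta)\eta(\delta+1)=\eta(1)=1$, conflicting with the $-1$ required to lie in $S_{-1,1}$. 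A parallel argument handles $b=\pm 1$ in Case B (where the product-sign conditions $\eta(-b)=-1$ and $\eta(b)=-1$ both fail at $b=\pm 1$, killing $n_{1,-1}$ and $n_{-1,1}$ outright). This yields $\delta_d(b)\le 4$ for every $b$, and hence $_c\Delta_F\le 4$.
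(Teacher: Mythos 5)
Your proposal follows the same skeleton as the paper's proof: reduce to $a=1$ via Lemma~\ref{power}, write $x^d=\eta(x)x^{-1}$ on $\gf(p^n)^\sharp$ using Remark~\ref{eta}, split into the four sets $S_{i,j}$, obtain exactly the same four quadratics (the paper's Cases I--IV), and give special treatment to $b=\pm 1$ because $\Delta(0)=1$ and $\Delta(-1)=(-1)^d$. Where you genuinely differ is in the bookkeeping: the paper argues separately according to the sign of $\eta(b)$ and the residue of $p^n\bmod 4$, whereas you extract the two uniform inequalities $n_{1,1}+n_{-1,-1}\le 2$ and $n_{1,-1}+n_{-1,1}\le 2$ from the involution $x\mapsto -1-x$ together with the Vieta identity $\eta(\alpha_1\alpha_2)\,\eta\bigl((\alpha_1+1)(\alpha_2+1)\bigr)=\eta(-1)$. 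This is cleaner and in fact slightly sharper than the paper on the diagonal sets (when $p^n\equiv 1\pmod 4$ and $b$ is a square the paper only bounds $S_{1,1}\cup S_{-1,-1}$ by $2+2$, while you get $2$ total), and your closing argument for $b=1$ in Case~A (roots of $x^2+x+1$ are squares, hence not in $S_{1,-1}$) is the same observation the paper phrases as $(\widetilde x+1)^2=\widetilde x$. Two trivial loose ends remain. First, dispose of $b=0$ separately: your four equations are then linear or inconsistent and your Vieta computations divide by $b$, but $\delta(0)\le 1$ is immediate. Second, the phrase ``primitive cube roots of unity'' needs a side remark for $p=3$, where $x^2+x+1=(x-1)^2$; the root $1$ is still a square, so the conclusion $n_{1,-1}=0$ survives. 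Neither affects the validity of the argument.
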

\begin{proof}It is easy to see that $\gcd(d,p^n-1)=1$, when $p^n\equiv 1 \pmod 4$ and $\gcd(d,p^n-1)=2$ when $p^n\equiv 3 \pmod 4$. For $b\in\gf(p^n)$, we consider the equation
\begin{equation}\label{eqnpn-3over21}
\Delta(x)=(x+1)^d+x^d=b.
\end{equation}
If $b=0$, (\ref{eqnpn-3over21}) has the unique solution $x=-\frac{1}{2}$, when $p^n\equiv 1 \pmod 4$, and has no solution, when $p^n\equiv 3 \pmod 4$. Now we assume $b\neq0$.
By Remark \ref{eta}, if $y\in\F_{p^n}^*$, then $y^\frac{p^n-3}{2}=y^{\frac{p^n-1}{2}-1}=\eta(y)y^{-1}$, so, for, if $x\in\gf(p^n)^\sharp$ is a solution of $\Delta(x)=b$ for fixed $b\in\gf(p^n)^*$, then $x$ satisfies
\begin{equation}\label{eqnpn-3over22}
\eta(x+1)(x+1)^{-1}+\eta(x)x^{-1}=b.
\end{equation}
We distinguish the following (disjoint) four cases:
\begin{table}[H]
\centering
\begin{tabular}{|l||c|c|c|c|c|c|}
\hline
&$x$&$\eta(x+1)$&$\eta(x)$&$Equation (\ref{eqnpn-3over22})$\\
[0.5ex]
\hline
Case I&$x\in S_{1,1}$&1&1&$x^2+(1-\frac{2}{b})x-\frac{1}{b}=0$\\
\hline
Case II&$x\in S_{-1,-1}$&$-1$&$-1$&$x^2+(1+\frac{2}{b})x+\frac{1}{b}=0$\\
\hline
Case III& $x\in S_{1,-1}$&1&$-1$&$x(x+1)=-\frac{1}{b}$\\
\hline
Case IV&$x\in S_{-1,1}$&$-1$&1&$x(x+1)=\frac{1}{b}$\\
\hline
\end{tabular}
\end{table}

%
%
%

If $p^n\equiv 1 \pmod 4$, then $\eta(-1)=1$. If $b$ is a square element, then $\eta(-\frac{1}{b})=\eta(\frac{1}{b})=1$. In both Cases III and IV, if $x$ is a solution, then $\eta(x(x+1))=-1$. So (\ref{eqnpn-3over22}) has no solution in $S_{1,-1}$ and $S_{-1,1}$, hence (\ref{eqnpn-3over22}) has at most $4$ solutions in $\gf(p^n)^\sharp$. If $b$ is a nonsquare element, then $\eta(-\frac{1}{b})=\eta(\frac{1}{b})=-1$, and we consider the solutions in each case. In Case~I, the product of two solutions of equation $x^2+(1-\frac{2}{b})x-\frac{1}{b}=0$ is $-\frac{1}{b}$, which is a nonsquare element, this means (\ref{eqnpn-3over22}) has at most one solution in $S_{1,1}$. Similarly, we can prove that (\ref{eqnpn-3over22}) has at most 1 solution in $S_{-1,-1}$. Now we consider Case~III.
Let $x_3$ and $-x_3-1$ be the two solutions of the quadratic equation $x(x+1)=-\frac{1}{b}$. It is easy to check that $x_3\in S_{1,-1}$ if and only if  $-x_3-1\in S_{-1,1}$. This implies that (\ref{eqnpn-3over22}) has at most $1$ solution in $S_{1,-1}$. Similarly, we can prove that (\ref{eqnpn-3over22}) has at most $1$ solution in $S_{-1,1}$. We thus proved that \eqref{eqnpn-3over22} has at most 4 solutions in $\gf(p^n)^\sharp$ for $b\in\gf(p^n)^*$.

It is easy to see that $\Delta(0)=1$ and $\Delta(-1)=-1$. When $b=1$ is a square element, it was shown that $\Delta(x)=1$ has no solutions in Cases~III and~IV. In Case~I, the quadratic equation is $x^2-x-1=0$, $x=\frac{1\pm\sqrt{5}}{2}$, $x+1=\frac{3\pm\sqrt{5}}{2}=x^2$. If there are solutions, $\eta(\frac{1\pm\sqrt{5}}{2})=1$. In Case II, the quadratic equation is $x^2+3x+1=0$, $x=\frac{-3\pm\sqrt{5}}{2}$, $x+1=\frac{-1\pm\sqrt{5}}{2}$, $x=-(x+1)^2$. If there are solutions, $\eta(\frac{-1\pm\sqrt{5}}{2})=-1$, i.e., $\eta(\frac{1\pm\sqrt{5}}{2})=-1$, since $\eta(-1)=1$. Therefore, (\ref{eqnpn-3over21}) has at most $2$ solutions in $S_{1,1}$ and $S_{-1,-1}$. Then $\delta(1)\leq3$. We can prove $\delta(-1)\leq3$ in a similar way. By the above discussions, we conclude that $_c\Delta_F \leq 4$ if $p^n\equiv 1 \pmod 4$.

If $p^n\equiv 3 \pmod 4$, then $\eta(-1)=-1$. Since  $d$ is then an even number,  if $x$ is a solution of $\Delta(x)=(x+1)^d+x^d=b$ for some $b$, so is $-x-1$. This means that the number of solutions for $\Delta(x)=b$ in $S_{1,1}$ and $S_{-1,-1}$ are the same. If $b$ is a square element, (\ref{eqnpn-3over22}) has at most $1$ solution in $S_{1,1}$ (Case I) since $\eta(-\frac{1}{b})=-1$, then (\ref{eqnpn-3over22}) has at most $1$ solution in $S_{-1,-1}$ (Case II). Now we consider Case~III.
Let $x_3$ and $-x_3-1$ be the two solutions of the quadratic equation $x(x+1)=-\frac{1}{b}$. It is easy to check that $x_3\in S_{1,-1}$ if and only if  $-x_3-1\in S_{-1,1}$. This implies that (\ref{eqnpn-3over22}) has at most $1$ solution in $S_{1,-1}$.
Finally, for Case IV, we have that $\eta(x(x+1))=-1$, while $\eta(\frac{1}{b})=1$, so there are no solutions in $S_{-1,1}$. We thus proved that \eqref{eqnpn-3over22} has at most 3 solutions in $\gf(p^n)^\sharp$ for $b\in\gf(p^n)^*$.


If $b$ is a nonsquare element, $\eta(\frac{1}{b})=-1$ and $\eta(-\frac{1}{b})=1$. Since in Cases III and IV we have that $\eta(x(x+1))=-1$ for any solution $x$ of (\ref{eqnpn-3over22}), there are no solutions in $S_{-1,1}$ (Case III), but there may be solutions in $S_{1,-1}$ (Case IV). However, if $x_4$ and $-x_4-1$ are the two solutions of the quadratic equation $x(x+1)=\frac{1}{b}$, it is easy to check that $x_4\in S_{1,-1}$ if and only if  $-x_4-1\in S_{-1,1}$. This implies that (\ref{eqnpn-3over22}) has at most $1$ solution in $S_{1,-1}$. As before, we can similarly prove that (\ref{eqnpn-3over22}) has at most $1$ solution in $S_{1,1}$, and at most $1$ solution in $S_{-1,-1}$. Then we proved that (\ref{eqnpn-3over22}) has at most 3 solutions in $\gf(p^n)^\sharp$, for $b\in\gf(p^n)^*$.

It is easy to see that $\Delta(0)=\Delta(-1)=1$. Now we focus on $b=1$, which is a square element. It was proven that $\Delta(x)=1$ has at most $1$ solution in
any of $S_{1,1},S_{-1,-1}$ and $S_{1,-1}$. If $\widetilde{x}\in S_{1,-1}$ is a solution of  (\ref{eqnpn-3over22}), then $\eta(\widetilde{x}+1)=1$, $\eta(\widetilde{x})=-1$ and $\widetilde{x}(\widetilde{x}+1)=-1$. Then $\widetilde{x}$ satisfies $(\widetilde{x}+1)^2=\widetilde{x}$, the left-hand side is a square while the right-hand side is a nonsquare, which is a contradiction. Then $\Delta(x)=1$ has no solutions in $S_{1,-1}$. That means $\delta(1)\leq4$. By the above discussions, we conclude that $_c\Delta_F \leq 4$ if $p^n\equiv 3 \pmod 4$, which completes the proof.
\end{proof}

\begin{thm}\label{over3}
Let $F(x)=x^d$ be a power function over $\gf(p^n)$, where $d=\frac{2p^n-1}{3}$ and $p^n \equiv 2  \pmod 3$. For $c\neq1$, $_c\Delta_F\leq 3$.
\end{thm}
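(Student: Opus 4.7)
The key arithmetic observation is that $3d = 2p^n-1 = 2(p^n-1)+1$, so $3d \equiv 1 \pmod{p^n-1}$. In other words, $d$ is the multiplicative inverse of $3$ modulo $p^n-1$, and $F(x)=x^d$ is the compositional inverse of the cubing map on $\gf(p^n)$: we have $F(x)^3 = x$ for every $x \in \gf(p^n)$ (including $x=0$), and $F$ is a permutation since $\gcd(d,p^n-1)=1$. The hypothesis $p^n \equiv 2 \pmod 3$ is exactly what guarantees $\gcd(3,p^n-1)=1$, so that cubing is a bijection of $\gf(p^n)$.

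With this in hand, the plan is to linearize the problem by a change of variables. Given $a,b \in \gf(p^n)$, set $u = F(x+a) = (x+a)^d$ and $v = F(x) = x^d$. The $c$-differential equation $(x+a)^d - c x^d = b$ becomes
\[
u - cv = b, \qquad u^3 - v^3 = a.
\]
The map $x \mapsto (u,v)$ is a bijection between $\gf(p^n)$ and the curve $\{(u,v) : u^3-v^3 = a\}$ (its inverse sends $(u,v)$ to $v^3$), so counting solutions $x$ is the same as counting pairs $(u,v)$. Substituting $u = b+cv$ into $u^3 - v^3 = a$ and expanding yields the cubic
\[
(c^3-1)\, v^3 + 3bc^2\, v^2 + 3b^2 c\, v + (b^3 - a) = 0.
\]

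Now I would argue that the leading coefficient $c^3-1$ is nonzero for every $c \neq 1$. Indeed, any $c \in \gf(p^n)^*$ with $c^3 = 1$ has multiplicative order dividing $\gcd(3, p^n-1) = 1$, forcing $c=1$; the only cube root of unity in $\gf(p^n)$ under our hypothesis is $1$ itself. (In characteristic $2$ the same reasoning applies: the cubic becomes $(c^3+1)v^3 + bc^2 v^2 + b^2 c v + (b^3+a)=0$, and $c^3+1 \neq 0$ whenever $c \neq 1$.) Hence, for $a \neq 0$, the cubic in $v$ is genuinely of degree three and has at most three roots in $\gf(p^n)$, giving $\DU cF(a,b) \le 3$. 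For $a=0$ the original equation reduces to $(1-c)x^d = b$, which has a unique solution since $F$ is a permutation and $c \neq 1$. Combining these with Lemma~\ref{power} and $\gcd(d,p^n-1)=1$ yields $\DU cF \le 3$.

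There is essentially no genuine obstacle: once the identity $3d \equiv 1 \pmod{p^n-1}$ is recognised, the problem collapses to reading off the degree of a polynomial. The only point requiring care is that the correspondence $x \leftrightarrow v$ really is a bijection (so that multiplicities do not slip in or out), but this is immediate from the bijectivity of cubing together with $F(0)=0$. It is also worth noting, as a sanity check, that for $c=0$ the cubic degenerates to $-v^3 + b^3 - a = 0$ with a unique root, recovering the known fact that $F$ is P$c$N in that case.
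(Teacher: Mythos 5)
Your proposal is correct and follows essentially the same route as the paper: both exploit $3d\equiv 1\pmod{p^n-1}$ to substitute cube roots ($u=(x+a)^d$, $v=x^d$, versus the paper's $\alpha^3=x+1$, $\beta^3=x$) and reduce the count to the roots of a cubic; the paper just normalizes to $a=1$ via Lemma~\ref{power} first. Your extra remarks (non-vanishing of $c^3-1$, the explicit bijection, the $a=0$ case) only make explicit what the paper leaves implicit.
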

\begin{proof}
We know that $\gcd(d,p^n-1)=1$ since $3d-2(p^n-1)=1$. For any $b\in\gf(p^n)$, we consider the equation $\Delta(x)=(x+1)^d-cx^d=b$. If $x\in\gf(p^n)$ is a solution of $\Delta(x)=b$, we let $x+1=\alpha^3$ and $x=\beta^3$. Such $\alpha,\beta\in\gf(p^n)$ exist uniquely because $\gcd(3,p^n-1)=1$. Then $\alpha$ and $\beta$ satisfy $\alpha^3-\beta^3=1$ and $b=\alpha^{3d}-c\beta^{3d}=\alpha-c\beta$. Note that $x$ is uniquely determined by $\beta$ and $\beta$ satisfies a cubic equation $(b+c\beta)^3-\beta^3=1$, which has at most $3$ solutions in $\gf(p^n)$. This implies $_c\Delta_F\leq 3$.
\end{proof}

\section{Concluding remarks}\label{sec:conclusion}

In 2020, two of us, along with Ellingsen, Felke, and Tkachenko~\cite{EFRST20} defined a new (output) multiplicative differential and the corresponding
$c$-differential uniformity. In this paper, we push further the
study initiated in~\cite{EFRST20} and continued in~\cite{SPRS20,SG20}, where the authors compute the
$c$-differential uniformity of some power functions over finite fields
(which represent an important class of functions due to their low
implementation cost in a hardware environment). Precisely, here, we
study the $c$-differential uniformity of some APN power mappings
presented in \cite{HRS99,HS97} for particular values of $c$. Our
results highlight that they have $c$-differential uniformity
for those values of $c$ higher than expected and perhaps, desired. Therefore, we believe that it
would be important to study this new notion when considering PN/APN functions and differential uniformity, in general, from now on.

\vskip.2cm
\noindent
{\bf Acknowledgments.} The authors would like to express their sincere appreciation for the reviewers' careful reading, valuable comments, and suggestions. They also sincerely thank Associate  Editor Prof. Anne Canteaut for the prompt handling of our paper and interesting comments on our work.

\end{document}